\documentclass[12pt]{article}

\usepackage{CJKutf8}
\usepackage{cite}
\usepackage{graphicx}
\usepackage{amsmath}
\usepackage{amsthm}

\newtheorem{lemma}{Lemma}

\usepackage{geometry}
\usepackage{framed}
\usepackage{multirow}
\usepackage{rotating}
\usepackage{amsfonts}
\usepackage{longtable}
\usepackage{array} %%

\usepackage{fancyhdr}
\begin{document}

\title{Guaranteed Fast Implementation of the Split Covariance Intersection Filter: Nested Newton Method Thanks to the Fourth-Order Convexity of $w$-Optimization}
\date{}

\author{Hao Li 
\thanks{Namely \begin{CJK}{UTF8}{gbsn}李颢\end{CJK} .}
}

\maketitle

\begin{abstract}
The split covariance intersection filter (Split CIF) is a useful tool for general data fusion and has the potential to be applied in a variety of engineering tasks. The $w$-optimization problem involved in the Split CIF concerns the performance and implementation efficiency of the Split CIF. It is known that the $w$-optimization problem enjoys the desirable property of convexity (or more clearly, the second-order convexity in this paper's context). This paper proves that the $w$-optimization problem further enjoys a more desirable property namely the fourth-order convexity, thanks to which a guaranteed fast implementation of the Split CIF can be realized. The new implementation is coined as the nested Newton method, which is also presented in this paper.
\end{abstract}

\section{Introduction}

The split covariance intersection filter (or Split CIF for short) \cite{Julier2001, Li2013a, Cros2025} is a useful tool for general data fusion
\footnote{The paper \cite{Julier2001}, which presents the Split CIF heuristically without theoretical analysis, originally coined it simply as ``split covariance intersection''. The paper \cite{Li2013a} provides a theoretical foundation for the Split CIF besides its original heuristic version. In \cite{Li2013a}, the term ``filter'' is added to form an analogy of the Split CIF to the well-known Kalman filter. Although the Split CIF is called ``filter'', it is not limited to temporal recursive estimation but can be used as a pure data fusion method besides the filtering sense, just as the Kalman filter can also be treated as a data fusion method.}.
As explained in \cite{Li2013a}, the Split CIF can reasonably handle both known independent information and unknown correlated information in source data, and has the potential to be applied in a variety of engineering tasks.
Representative application of the Split CIF can be in cooperative intelligent systems \cite{Li2013b, Wanasinghe2014, Pierre2018, ChenX2020, Li2022TITS, Li2024ITS, Li2025VTT} and in single intelligent system operation \cite{Li2013d, Allig2022, Li2022RAL, Li2023AprilTagNavigation, Li2024IV} as well.

An indispensable optimization step involved in the Split CIF, namely \textbf{$w$-optimization} (refer to Section \ref{sec:woptprob} for details of the problem statement), concerns the performance and implementation efficiency of the Split CIF. Convexity is always a desired property for optimization problems as it facilitates optimization considerably. Implementation of the Split CIF can also rely on convexity of the $w$-optimization problem which is theoretically guaranteed. The proof details for the convexity are provided in the author's works \cite{Li2022FARET_2, Li2022FARET_1}, whereas an online preprint excerpt can also be found in \cite{Li2021WOpt}.

The $w$-optimization problem of the Split CIF enjoys not only the desirable property of convexity (or more clearly, the second-order convexity in this paper's context) but also a more desirable property namely the fourth-order convexity. Thanks to the fourth-order convexity of the $w$-optimization problem, a guaranteed fast implementation of the Split CIF, which is coined as the nested Newton method, can be realized. This paper proves the fourth-order convexity of the $w$-optimization problem and introduces the nested Newton method based on the fourth-order convexity.

\section{The $w$-optimization problem} \label{sec:woptprob}

Matrices mentioned in this paper are symmetric matrices by default. Given matrices $\mathbf{P}_{1d}$, $\mathbf{P}_{1i}$, $\mathbf{P}_{2d}$, and $\mathbf{P}_{2i}$ that are positive semi-definite, i.e. $\mathbf{P}_{1d} \geq \mathbf{0}$, $\mathbf{P}_{1i} \geq \mathbf{0}$, $\mathbf{P}_{2d} \geq \mathbf{0}$, $\mathbf{P}_{2i} \geq \mathbf{0}$; denotations $\mathbf{P}_{1d}$, $\mathbf{P}_{1i}$, $\mathbf{P}_{2d}$, and $\mathbf{P}_{2i}$ are used for presentation of the Split CIF in \cite{Li2013a}. For $w \in [0,1]$, define
\begin{align} \label{eq:defineP}
\mathbf{P}_{1}(w) &= \mathbf{P}_{1d}/w + \mathbf{P}_{1i} \nonumber \\
\mathbf{P}_{2}(w) &= \mathbf{P}_{2d}/(1-w) + \mathbf{P}_{2i} \nonumber \\
\mathbf{P}(w) &= (\mathbf{P}_{1}(w)^{-1} + \mathbf{P}_{2}(w)^{-1})^{-1}
\end{align}
When $w=0$ or $w=1$, $\mathbf{P}(w)$ denotes the limit value as $w \to 0$ or $w \to 1$ respectively. For $w \in (0,1)$, we further assume that $\mathbf{P}_{1}(w)$ and $\mathbf{P}_{2}(w)$ are positive definite i.e. $\mathbf{P}_{1}(w)>0$, $\mathbf{P}_{2}(w)>0$. This fair assumption is well rooted in practical applications where $\mathbf{P}_{1}(w)$ and $\mathbf{P}_{2}(w)$ normally correspond to covariances of certain estimates and hence are always positive definite. With this assumption, we naturally have $\mathbf{P}(w)>0$.

The $w$-optimization problem involved in the Split CIF \cite{Li2013a} can be formalized as
\begin{equation}  \label{eq:wopt_org}
w = \arg \min_{w \in [0,1]} \det(\mathbf{P}(w))
\end{equation}
or equivalently
\begin{equation}  \label{eq:wopt}
w = \arg \min_{w \in [0,1]} \ln \det(\mathbf{P}(w)).
\end{equation}
It is worth noting that the $w$-optimization problem can be generally formalized as
\begin{equation}  \label{eq:wopt_general}
w = \arg \min_{w \in [0,1]} \|| \mathbf{P}(w) \||,
\end{equation}
where the abused matrix-norm-style notation $\|| \cdot \||$ is in fact viewed as a generic positive definite real-value matrix function rather than certain strict matrix norm, only if it serves as an appropriate metric that characterizes the ``size'' of the matrix $\mathbf{P}(w)$. It can indeed be a matrix norm, yet two commonly adopted choices are the determinant function $\det(\cdot)$ and the trace function $tr \{ \cdot \}$ respectively. 

According to his field experiences, the author recommends the determinant function because it saves the not-easy-to-tune step of normalization which is inevitable if the trace function is adopted. This is why the determinant function $\det(\cdot)$ is adopted by default in the $w$-optimization problem. Besides, concerning the two equivalent formalisms (\ref{eq:wopt_org}) and (\ref{eq:wopt}), the latter namely (\ref{eq:wopt}) is preferred as it facilitates analysis and is numerically more desirable. Therefore, whenever the $w$-optimization problem is mentioned throughout this paper, it refers to (\ref{eq:wopt}) by default.

The conventional-sense convexity of the $w$-optimization problem is equivalent to the following inequality
\begin{equation} \label{eq:convexC2}
\frac{d^2}{dw^2} \ln \det(\mathbf{P}(w)) \geq 0
\end{equation}
holding true for $w \in (0,1)$.

Besides, consider an even stronger (and more desirable) version of convexity of the $w$-optimization problem, which is equivalent to the following inequality
\begin{equation} \label{eq:convexC4}
\frac{d^4}{dw^4} \ln \det(\mathbf{P}(w)) \geq 0
\end{equation}
also holding true besides (\ref{eq:convexC2}). The stronger convexity (\ref{eq:convexC4}) is referred to as the \textbf{fourth-order convexity}. In other words, the fourth-order derivative of the objective function is always non-negative. For a generic objective function $f(x)$, if its $m$-th-order derivative is always non-negative, namely
\begin{equation}  \label{eq:m_order_convexity}
\frac{d^m}{dx^m} f(x) \geq 0,
\end{equation}
then it is said to have the \textbf{$m$-th-order convexity}. According to (\ref{eq:m_order_convexity}), the conventional-sense convexity is right the second-order convexity.

It is worth noting that an existing concept similar to the $m$-th-order convexity defined in (\ref{eq:m_order_convexity}) is the ``$(m-1)$-hyperconvexity'' used in \cite{Vatsala2006}. However, the author refrains from borrowing this already existing concept from \cite{Vatsala2006}, for sake of not arousing confusion, because the concept ``\{number\}-hyperconvexity'' (for example, $0$-hyperconvexity, $1$-hyperconvexity, $2$-hyperconvexity, and so on) has long since been already used to mean a totally different thing for metric geometry in mathematics literature \cite{Aronszajn1956, Rabier1985}.

\section{The fourth-order convexity of $w$-optimization} \label{sec:fourth_order_convexity}

As mentioned previously, the proof details for the second-order convexity are provided in \cite{Li2022FARET_2, Li2022FARET_1}. This section provides the proof details for the fourth-order convexity. In other words, this section proves (\ref{eq:convexC4}). How to take advantage of the four-order convexity to realize the nested Newton method is postponed to Section \ref{sec:nested_newton}.

For denotation conciseness in the following proof, we omit explicit writing of ``$(w)$'' for $w$-parametrized variables; for example, we denote above mentioned $\mathbf{P}_{1}(w)$, $\mathbf{P}_{2}(w)$, and $\mathbf{P}(w)$ simply as $\mathbf{P}_{1}$, $\mathbf{P}_{2}$, and $\mathbf{P}$. Besides, it is taken for granted that all $w$-parametrized variables in following theoretical analysis are arbitrary-order differentiable, as relevant $w$-parametrized variables in practical applications are indeed so.

\begin{lemma} \label{lm:diff1}
Given a $w$-parameterized matrix $\mathbf{M}(w)$ satisfying $\mathbf{M}(w)>0$, we have
\begin{equation*}
\frac{d}{dw} \ln \det(\mathbf{M}) = tr\{\mathbf{M}^{-1} \frac{d \mathbf{M}}{dw} \}.
\end{equation*}
\end{lemma}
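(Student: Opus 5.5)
The plan is to prove the identity by a first-order perturbation of $\ln\det$ around a fixed parameter value. Differentiability of $\mathbf{M}(w)$ is taken for granted, as stated in the paper. Fix $w$ and write $\mathbf{M}(w+h) = \mathbf{M}(w) + h\,\mathbf{M}'(w) + o(h)$, where $\mathbf{M}' = d\mathbf{M}/dw$. Since $\mathbf{M}(w)>0$, it is invertible, and we can factor
\begin{equation*}
\mathbf{M}(w+h) = \mathbf{M}(w)\bigl(\mathbf{I} + h\,\mathbf{M}(w)^{-1}\mathbf{M}'(w) + o(h)\bigr).
\end{equation*}
Taking determinants and logarithms then gives
\begin{equation*}
\ln\det\mathbf{M}(w+h) = \ln\det\mathbf{M}(w) + \ln\det\bigl(\mathbf{I} + h\mathbf{A} + o(h)\bigr),
\end{equation*}
with $\mathbf{A} = \mathbf{M}^{-1}\mathbf{M}'$. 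Both logarithms are well defined because the determinants are positive: $\mathbf{M}(w+h)>0$ by hypothesis, and it is in any case close to $\mathbf{M}(w)$.

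The key step is the expansion $\det(\mathbf{I} + h\mathbf{A} + o(h)) = 1 + h\,tr\{\mathbf{A}\} + o(h)$. I would prove it by expanding the determinant via the Leibniz formula. The identity permutation contributes $\prod_k (1 + h a_{kk} + o(h)) = 1 + h\sum_k a_{kk} + o(h)$. Every other permutation moves at least two indices, so its term contains at least two off-diagonal factors, each $O(h)$, and is therefore $o(h)$. An equivalent route is to note that $\det(\mathbf{I} + h\mathbf{A}) = \prod_k(1 + h\lambda_k)$ over the eigenvalues of $\mathbf{A}$. The Leibniz route is cleaner because it avoids any diagonalizability issues.

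Combining these with $\ln(1+x) = x + o(x)$ yields
\begin{equation*}
\ln\det\mathbf{M}(w+h) - \ln\det\mathbf{M}(w) = h\,tr\{\mathbf{M}^{-1}\mathbf{M}'\} + o(h).
\end{equation*}
Dividing by $h$ and letting $h\to 0$ gives the claim.

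The only mildly delicate point is bookkeeping the $o(h)$ terms inside the determinant expansion, and the Leibniz argument handles it directly. As an alternative, if $\mathbf{M}$ is positive definite one could write $\ln\det\mathbf{M} = tr\{\ln \mathbf{M}\}$. However, that would require differentiating a matrix logarithm of a non-commuting family, so I prefer the perturbative route above.
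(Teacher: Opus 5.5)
Your argument is correct, but it is more self-contained than the paper's. The paper simply quotes Jacobi's formula $\frac{d}{dw}\det(\mathbf{M}) = \det(\mathbf{M})\, tr\{\mathbf{M}^{-1}\frac{d\mathbf{M}}{dw}\}$ from the literature and then applies the chain rule to $\ln$, so its proof is two lines.

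You instead prove the needed case of Jacobi's formula yourself. Multiplicativity of the determinant reduces the problem to a neighbourhood of the identity via $\det\mathbf{M}(w+h) = \det\mathbf{M}(w)\,\det(\mathbf{I}+h\mathbf{A}+o(h))$. The Leibniz expansion then gives $\det(\mathbf{I}+h\mathbf{A}+o(h)) = 1+h\,tr\{\mathbf{A}\}+o(h)$, because only the identity permutation contributes at first order.

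The paper's route buys brevity at the price of an external citation. Yours is fully elementary and makes visible what is actually used: differentiability of $\mathbf{M}$ at the point $w$ and invertibility of $\mathbf{M}(w)$, with positivity of the determinant needed only so that the logarithm is defined.

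One small correction to your side remark: $\det(\mathbf{I}+h\mathbf{A}) = \prod_k(1+h\lambda_k)$ holds for every square matrix, as it follows from the characteristic polynomial. So diagonalizability is not actually an obstacle there. In this paper's setting $\mathbf{A} = \mathbf{M}^{-1}\mathbf{M}'$ is similar to the symmetric matrix $\mathbf{M}^{-1/2}\mathbf{M}'\mathbf{M}^{-1/2}$ anyway. The Leibniz route is preferable simply because it absorbs the $o(h)$ perturbation with no extra work.
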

\begin{proof}
According to the Jacobi's formula \cite{Horn1991}
\begin{equation*}
\frac{d}{dw} \det(\mathbf{M}) = \det(\mathbf{M}) tr\{\mathbf{M}^{-1} \frac{d \mathbf{M}}{dw} \}.
\end{equation*}
Thus we have
\begin{equation*}
\frac{d}{dw} \ln \det(\mathbf{M}) = \frac{1}{\det(\mathbf{M})} \frac{d}{dw} \det(\mathbf{M}) = tr\{\mathbf{M}^{-1} \frac{d \mathbf{M}}{dw} \}.
\end{equation*}
\end{proof}

\begin{lemma}  \label{lm:diff_inv}
Given a matrix $\mathbf{M}(w)$, the derivative of its inverse is computed as
\begin{equation*}
\frac{d \mathbf{M}^{-1}}{dw} = -\mathbf{M}^{-1} \frac{d \mathbf{M}}{dw} \mathbf{M}^{-1}.
\end{equation*}
\end{lemma}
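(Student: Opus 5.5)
The plan is to differentiate the identity $\mathbf{M}\mathbf{M}^{-1} = \mathbf{I}$ with respect to $w$ and then solve for the derivative of the inverse. The lemma implicitly requires $\mathbf{M}(w)$ to be invertible on the range of $w$ under consideration, so that $\mathbf{M}^{-1}$ exists. I will take that for granted, as the paper does in the context where $\mathbf{M}>0$. I will also assume, as the paper stipulates, that all $w$-parametrized quantities are differentiable. In particular $\mathbf{M}^{-1}(w)$ is differentiable: its entries are, by Cramer's rule, rational functions of the entries of $\mathbf{M}(w)$ whose denominator $\det(\mathbf{M})$ is nonzero.

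The key steps, in order, are as follows.

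First, apply the product rule for matrix-valued functions to $\mathbf{M}\mathbf{M}^{-1}=\mathbf{I}$, taking care to preserve the order of the factors since matrix multiplication is noncommutative. This gives
\begin{equation*}
\frac{d \mathbf{M}}{dw}\mathbf{M}^{-1} + \mathbf{M}\frac{d \mathbf{M}^{-1}}{dw} = \frac{d \mathbf{I}}{dw} = \mathbf{0}.
\end{equation*}
The product rule itself holds entrywise, because each entry of the product is a finite sum of products of scalar differentiable functions.

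Second, rearrange to get $\mathbf{M}\frac{d \mathbf{M}^{-1}}{dw} = -\frac{d \mathbf{M}}{dw}\mathbf{M}^{-1}$.

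Third, left-multiply both sides by $\mathbf{M}^{-1}$, which yields the claimed formula
\begin{equation*}
\frac{d \mathbf{M}^{-1}}{dw} = -\mathbf{M}^{-1}\frac{d \mathbf{M}}{dw}\mathbf{M}^{-1}.
\end{equation*}

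There is no real obstacle here. The only points needing care are keeping the factor order correct in the product rule and noting the differentiability of $\mathbf{M}^{-1}$, which is justified by the Cramer's rule argument above.
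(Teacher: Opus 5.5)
Your proof is correct and follows essentially the paper's route. The paper differentiates $\mathbf{M}^{-1}\mathbf{M} = \mathbf{I}$ instead, leaving the final right-multiplication by $\mathbf{M}^{-1}$ implicit, while your remarks on invertibility and on differentiability of $\mathbf{M}^{-1}$ via Cramer's rule fill in details the paper omits.
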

Validity of \textbf{Lemma} \ref{lm:diff_inv} can be easily known according to the following equality
\begin{align*}
\mathbf{0} = \frac{d \mathbf{I}}{d w} = \frac{d (\mathbf{M}^{-1} \mathbf{M})}{d w} = \frac{d \mathbf{M}^{-1}}{d w} \mathbf{M} + \mathbf{M}^{-1} \frac{d \mathbf{M}}{d w}.
\end{align*}

\begin{lemma} \label{lm:lndetP}
Note (\ref{eq:defineP}) and we have
\begin{align*}
\ln \det(\mathbf{P}) = \ln \det(\mathbf{P}_{1}) + \ln \det(\mathbf{P}_{2}) - \ln \det(\mathbf{P}_{1} + \mathbf{P}_{2}).
\end{align*}
\end{lemma}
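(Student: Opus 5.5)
The plan is to factor $\mathbf{P}^{-1}$ as a single product of three matrices whose determinants are the quantities in the statement, then take determinants and logarithms.

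\emph{Step 1: rewrite $\mathbf{P}^{-1}$.} Starting from (\ref{eq:defineP}), we have $\mathbf{P}^{-1} = \mathbf{P}_{1}^{-1} + \mathbf{P}_{2}^{-1}$. Pulling $\mathbf{P}_{1}^{-1}$ out on the left and $\mathbf{P}_{2}^{-1}$ out on the right gives
\begin{equation*}
\mathbf{P}_{1}^{-1} + \mathbf{P}_{2}^{-1} = \mathbf{P}_{1}^{-1}\left(\mathbf{P}_{2} + \mathbf{P}_{1}\right)\mathbf{P}_{2}^{-1}.
\end{equation*}
This is checked by expanding the right-hand side: $\mathbf{P}_{1}^{-1}\mathbf{P}_{2}\mathbf{P}_{2}^{-1} + \mathbf{P}_{1}^{-1}\mathbf{P}_{1}\mathbf{P}_{2}^{-1} = \mathbf{P}_{1}^{-1} + \mathbf{P}_{2}^{-1}$. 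Inverting both sides then yields $\mathbf{P} = \mathbf{P}_{2}(\mathbf{P}_{1}+\mathbf{P}_{2})^{-1}\mathbf{P}_{1}$.

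\emph{Step 2: take determinants.} By multiplicativity of the determinant,
\begin{equation*}
\det(\mathbf{P}) = \frac{\det(\mathbf{P}_{1})\det(\mathbf{P}_{2})}{\det(\mathbf{P}_{1}+\mathbf{P}_{2})}.
\end{equation*}

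\emph{Step 3: take logarithms.} For $w\in(0,1)$ the standing assumptions give $\mathbf{P}_{1}>0$ and $\mathbf{P}_{2}>0$, hence $\mathbf{P}_{1}+\mathbf{P}_{2}>0$. All three determinants are therefore positive, so taking $\ln$ of the identity in Step 2 gives exactly the claimed formula.

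\emph{Main obstacle.} The only point needing care is that every inverse and logarithm is well defined. This is why I restrict to the open interval $w\in(0,1)$, where positive definiteness holds. The endpoints $w=0,1$ are defined in the paper by limits and are not needed for the derivative computations this lemma supports.
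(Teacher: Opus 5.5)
Your proof is correct and follows essentially the same route as the paper: factor $\mathbf{P}^{-1}$ around $\mathbf{P}_1+\mathbf{P}_2$, invert, take determinants, then logarithms. The paper uses the mirror factorization $\mathbf{P}_2^{-1}(\mathbf{P}_2+\mathbf{P}_1)\mathbf{P}_1^{-1}$, which makes no difference, and your check that all three determinants are positive for $w\in(0,1)$ makes explicit what the paper leaves implicit when taking logarithms.
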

\begin{proof}
From (\ref{eq:defineP}) we can derive
\begin{align*}
\mathbf{P} = (\mathbf{P}_{1}^{-1} + \mathbf{P}_{2}^{-1})^{-1} = (\mathbf{P}_{2}^{-1} (\mathbf{P}_{2} + \mathbf{P}_{1}) \mathbf{P}_{1}^{-1})^{-1} = \mathbf{P}_{1} (\mathbf{P}_{1} + \mathbf{P}_{2})^{-1} \mathbf{P}_{2},
\end{align*}
which further leads to
\begin{align*}
\det(\mathbf{P}) &= \det(\mathbf{P}_{1} (\mathbf{P}_{1} + \mathbf{P}_{2})^{-1} \mathbf{P}_{2}) = \det(\mathbf{P}_{1}) \cdot \det(\mathbf{P}_{2}) \cdot \det((\mathbf{P}_{1} + \mathbf{P}_{2})^{-1}) \\
  &= \frac{\det(\mathbf{P}_{1}) \cdot \det(\mathbf{P}_{2})}{\det(\mathbf{P}_{1} + \mathbf{P}_{2})}.
\end{align*}
So
\begin{align*}
\ln \det(\mathbf{P}) = \ln \frac{\det(\mathbf{P}_{1}) \cdot \det(\mathbf{P}_{2})}{\det(\mathbf{P}_{1} + \mathbf{P}_{2})} = \ln \det(\mathbf{P}_{1}) + \ln \det(\mathbf{P}_{2}) - \ln \det(\mathbf{P}_{1} + \mathbf{P}_{2}).
\end{align*}
\end{proof}

\begin{lemma} \label{lm:trcyclic}
Given two matrices $\mathbf{M}_1$ and $\mathbf{M}_2$ whose dimensions are consistent with each other for multiplication $\mathbf{M}_1 \mathbf{M}_2$ and $\mathbf{M}_2 \mathbf{M}_1$, we have $tr\{\mathbf{M}_1 \mathbf{M}_2\} = tr\{\mathbf{M}_2 \mathbf{M}_1\}$.
\end{lemma}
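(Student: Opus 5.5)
The plan is to prove Lemma \ref{lm:trcyclic} by expanding both traces entrywise and reindexing the resulting double sum. Let $\mathbf{M}_1$ be $m \times n$ and $\mathbf{M}_2$ be $n \times m$, so that both products are defined: $\mathbf{M}_1 \mathbf{M}_2$ is $m \times m$ and $\mathbf{M}_2 \mathbf{M}_1$ is $n \times n$. Both are therefore square and both traces make sense. Note that, unlike the default convention in this section, the matrices here need not be symmetric or even square, so the argument will not use symmetry.

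First I write the diagonal entries of each product from the definition of matrix multiplication:
\begin{equation*}
(\mathbf{M}_1 \mathbf{M}_2)_{kk} = \sum_{l=1}^{n} (\mathbf{M}_1)_{kl} (\mathbf{M}_2)_{lk}, \qquad (\mathbf{M}_2 \mathbf{M}_1)_{ll} = \sum_{k=1}^{m} (\mathbf{M}_2)_{lk} (\mathbf{M}_1)_{kl}.
\end{equation*}

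Next I sum over the diagonal indices, which gives
\begin{equation*}
tr\{\mathbf{M}_1 \mathbf{M}_2\} = \sum_{k=1}^{m} \sum_{l=1}^{n} (\mathbf{M}_1)_{kl} (\mathbf{M}_2)_{lk}, \qquad tr\{\mathbf{M}_2 \mathbf{M}_1\} = \sum_{l=1}^{n} \sum_{k=1}^{m} (\mathbf{M}_2)_{lk} (\mathbf{M}_1)_{kl}.
\end{equation*}
These two finite double sums are equal. Scalar multiplication is commutative, so each summand satisfies $(\mathbf{M}_1)_{kl} (\mathbf{M}_2)_{lk} = (\mathbf{M}_2)_{lk} (\mathbf{M}_1)_{kl}$. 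The order of a finite double summation may also be interchanged. Together these give $tr\{\mathbf{M}_1 \mathbf{M}_2\} = tr\{\mathbf{M}_2 \mathbf{M}_1\}$.

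There is no real obstacle here. The only point needing care is tracking the dimensions, so that the two traces are seen to run over indices of different ranges ($m$ versus $n$) and yet collapse to the same sum over all pairs $(k,l)$. This is the form in which the lemma will later be used: to cycle factors such as $\mathbf{P}_1^{-1}$ and $d\mathbf{P}_1/dw$ inside traces when differentiating the expression in Lemma \ref{lm:lndetP} repeatedly via Lemma \ref{lm:diff1} and Lemma \ref{lm:diff_inv}.
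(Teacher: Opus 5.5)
Your proof is correct: writing both traces as the double sum $\sum_{k=1}^{m}\sum_{l=1}^{n} (\mathbf{M}_1)_{kl}(\mathbf{M}_2)_{lk}$ and swapping the finite summations is the standard argument. The paper gives no proof of its own and simply cites Horn and Johnson's \emph{Matrix Analysis}, where this same entrywise argument is the usual one, so your route is essentially the one the paper relies on.
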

The proof for \textbf{Lemma}.\ref{lm:trcyclic} can be found in \cite{Horn1990}. More generally, given matrices $\mathbf{M}_1$, $\mathbf{M}_2$, ..., $\mathbf{M}_k$, we have 
\begin{align*}
&tr\{\mathbf{M}_1 \mathbf{M}_2 ... \mathbf{M}_k\} = tr\{\mathbf{M}_2 \mathbf{M}_3 ... \mathbf{M}_k \mathbf{M}_1\} \\
&~~~~= ... = tr\{\mathbf{M}_k \mathbf{M}_1 ... \mathbf{M}_{k-2} \mathbf{M}_{k-1}\},
\end{align*}
which is called \textit{cyclic property} of trace operation.

\subsection{Derivatives of the objective function}

Following \textbf{Lemma} \ref{lm:diff1} to \textbf{Lemma} \ref{lm:lndetP} and using \textbf{Lemma} \ref{lm:trcyclic} (the cyclic property of trace operation) when necessary in following derivation, we can compute the first-order, second-order, third-order, and fourth-order derivatives of the objective function $\ln \det(\mathbf{P}(w))$ consecutively as
\begin{align}  \label{eq:lndetP_1stDif}
\frac{d}{d w} \ln \det \mathbf{P} = tr\{ \mathbf{P}_1^{-1} \frac{d \mathbf{P}_1}{dw} + \mathbf{P}_2^{-1} \frac{d \mathbf{P}_2}{dw} - (\mathbf{P}_1+\mathbf{P}_2)^{-1} \frac{d (\mathbf{P}_1+\mathbf{P}_2)}{dw} \},  \qquad \qquad \qquad \quad
\end{align}

\begin{align} \label{eq:lndetP_2ndDif}
& \frac{d^2}{dw^2} \ln \det \mathbf{P} = tr\{-\mathbf{P}_1^{-1} \frac{d \mathbf{P}_1}{dw} \mathbf{P}_1^{-1} \frac{d \mathbf{P}_1}{dw} + \mathbf{P}_1^{-1} \frac{d^2 \mathbf{P}_1}{dw^2} -\mathbf{P}_2^{-1} \frac{d \mathbf{P}_2}{dw} \mathbf{P}_2^{-1} \frac{d \mathbf{P}_2}{dw} + \mathbf{P}_2^{-1} \frac{d^2 \mathbf{P}_2}{dw^2}  \nonumber \\
&\quad +(\mathbf{P}_1+\mathbf{P}_2)^{-1} \frac{d (\mathbf{P}_1+\mathbf{P}_2)}{dw} (\mathbf{P}_1+\mathbf{P}_2)^{-1} \frac{d (\mathbf{P}_1+\mathbf{P}_2)}{dw} - (\mathbf{P}_1+\mathbf{P}_2)^{-1} \frac{d^2 (\mathbf{P}_1+\mathbf{P}_2)}{dw^2} \},
\end{align}

\begin{align} \label{eq:lndetP_3rdDif}
\frac{d^3}{dw^3} \ln \det \mathbf{P} = tr\{ 2 \mathbf{P}_1^{-1} \frac{d \mathbf{P}_1}{dw} \mathbf{P}_1^{-1} \frac{d \mathbf{P}_1}{dw} \mathbf{P}_1^{-1} \frac{d \mathbf{P}_1}{dw} - 3 \mathbf{P}_1^{-1} \frac{d \mathbf{P}_1}{dw} \mathbf{P}_1^{-1} \frac{d^2 \mathbf{P}_1}{d w^2} + \mathbf{P}_1^{-1} \frac{d^3 \mathbf{P}_1}{d w^3} & \nonumber \\
  + 2 \mathbf{P}_2^{-1} \frac{d \mathbf{P}_2}{dw} \mathbf{P}_2^{-1} \frac{d \mathbf{P}_2}{dw} \mathbf{P}_2^{-1} \frac{d \mathbf{P}_2}{dw} - 3 \mathbf{P}_2^{-1} \frac{d \mathbf{P}_2}{dw} \mathbf{P}_2^{-1} \frac{d^2 \mathbf{P}_2}{d w^2} + \mathbf{P}_2^{-1} \frac{d^3 \mathbf{P}_2}{d w^3} & \nonumber \\
  - 2 (\mathbf{P}_1+\mathbf{P}_2)^{-1} \frac{d (\mathbf{P}_1+\mathbf{P}_2)}{dw} (\mathbf{P}_1+\mathbf{P}_2)^{-1} \frac{d (\mathbf{P}_1+\mathbf{P}_2)}{dw} (\mathbf{P}_1+\mathbf{P}_2)^{-1} \frac{d (\mathbf{P}_1+\mathbf{P}_2)}{dw} & \nonumber \\
  + 3 (\mathbf{P}_1+\mathbf{P}_2)^{-1} \frac{d (\mathbf{P}_1+\mathbf{P}_2)}{dw} (\mathbf{P}_1+\mathbf{P}_2)^{-1} \frac{d^2 (\mathbf{P}_1+\mathbf{P}_2)}{d w^2} - (\mathbf{P}_1+\mathbf{P}_2)^{-1} \frac{d^3 (\mathbf{P}_1+\mathbf{P}_2)}{d w^3} & \},
\end{align}

\begin{align} \label{eq:lndetP_4thDif}
& \quad \frac{d^4}{dw^4} \ln \det \mathbf{P} = tr\{  \nonumber \\
& - 6 (\mathbf{P}_1^{-1} \frac{d \mathbf{P}_1}{dw})^4 + 12 (\mathbf{P}_1^{-1} \frac{d \mathbf{P}_1}{dw})^2 \mathbf{P}_1^{-1} \frac{d^2 \mathbf{P}_1}{d w^2} - 3 (\mathbf{P}_1^{-1} \frac{d^2 \mathbf{P}_1}{d w^2})^2 - 4 \mathbf{P}_1^{-1} \frac{d \mathbf{P}_1}{dw} \mathbf{P}_1^{-1} \frac{d^3 \mathbf{P}_1}{d w^3} + \mathbf{P}_1^{-1} \frac{d^4 \mathbf{P}_1}{d w^4}  \nonumber \\
& - 6 (\mathbf{P}_2^{-1} \frac{d \mathbf{P}_2}{dw})^4 + 12 (\mathbf{P}_2^{-1} \frac{d \mathbf{P}_2}{dw})^2 \mathbf{P}_2^{-1} \frac{d^2 \mathbf{P}_2}{d w^2} - 3 (\mathbf{P}_2^{-1} \frac{d^2 \mathbf{P}_2}{d w^2})^2 - 4 \mathbf{P}_2^{-1} \frac{d \mathbf{P}_2}{dw} \mathbf{P}_2^{-1} \frac{d^3 \mathbf{P}_2}{d w^3} + \mathbf{P}_2^{-1} \frac{d^4 \mathbf{P}_2}{d w^4}  \nonumber \\
& + 6 [(\mathbf{P}_1+\mathbf{P}_2)^{-1} \frac{d (\mathbf{P}_1+\mathbf{P}_2)}{dw}]^4 - 12 [(\mathbf{P}_1+\mathbf{P}_2)^{-1} \frac{d (\mathbf{P}_1+\mathbf{P}_2)}{dw}]^2 (\mathbf{P}_1+\mathbf{P}_2)^{-1} \frac{d^2 (\mathbf{P}_1+\mathbf{P}_2)}{d w^2}  \nonumber \\
& + 3 [(\mathbf{P}_1+\mathbf{P}_2)^{-1} \frac{d^2 (\mathbf{P}_1+\mathbf{P}_2)}{d w^2}]^2 + 4 (\mathbf{P}_1+\mathbf{P}_2)^{-1} \frac{d (\mathbf{P}_1+\mathbf{P}_2)}{dw} (\mathbf{P}_1+\mathbf{P}_2)^{-1} \frac{d^3 (\mathbf{P}_1+\mathbf{P}_2)}{d w^3}  \nonumber \\
& - (\mathbf{P}_1+\mathbf{P}_2)^{-1} \frac{d^4 (\mathbf{P}_1+\mathbf{P}_2)}{d w^4}\}.
\end{align}

\subsection{Variable transformation}

For $w \in (0,1)$, define the following notations
\begin{equation}  \label{eq:var_simple}
\mathbf{D}_1 (w) \equiv \frac{\mathbf{P}_{1d}}{w}, \qquad \mathbf{D}_2 (w) \equiv \frac{\mathbf{P}_{2d}}{1-w}, \qquad w_1 \equiv -w, \qquad w_2 \equiv 1-w.
\end{equation}
As $\mathbf{P}_{1d} \geq 0$ and $\mathbf{P}_{2d} \geq 0$, we also have $\mathbf{D}_1 \geq 0$, $\mathbf{D}_2 \geq 0$. Like $\mathbf{P}_{1d}$ and $\mathbf{P}_{2d}$, $\mathbf{D}_1$ and $\mathbf{D}_2$ are also symmetric matrices. From definitions given in (\ref{eq:defineP}) we have
\begin{subequations}  \label{eq:P1+P2_difs}
\begin{align}
\frac{d \mathbf{P}_1}{dw} &= \frac{\mathbf{D}_1}{w_1}, & \frac{d \mathbf{P}_2}{dw} &= \frac{\mathbf{D}_2}{w_2},  \\ 
\frac{d^2 \mathbf{P}_1}{d w^2} &= \frac{2 \mathbf{D}_1}{w_1^2}, & \frac{d^2 \mathbf{P}_2}{d w^2} &= \frac{2 \mathbf{D}_2}{w_2^2},  \\
\frac{d^3 \mathbf{P}_1}{d w^3} &= \frac{6 \mathbf{D}_1}{w_1^3}, & \frac{d^3 \mathbf{P}_2}{d w^3} &= \frac{6 \mathbf{D}_2}{w_2^3},  \\
\frac{d^4 \mathbf{P}_1}{d w^4} &= \frac{24 \mathbf{D}_1}{w_1^4}, & \frac{d^4 \mathbf{P}_2}{d w^4} &= \frac{24 \mathbf{D}_2}{w_2^4}.
\end{align}
\end{subequations}

Let $\mathbf{P}_3 = \mathbf{P}_1 + \mathbf{P}_2$. Substitute (\ref{eq:P1+P2_difs}) into (\ref{eq:lndetP_4thDif}) and obtain
\begin{align*}
& \quad \frac{d^4}{dw^4} \ln \det \mathbf{P} = tr\{  \nonumber \\
& - 6 (\mathbf{P}_1^{-1} \frac{\mathbf{D}_1}{w_1})^4 + 12 (\mathbf{P}_1^{-1} \frac{\mathbf{D}_1}{w_1})^2 \mathbf{P}_1^{-1} \frac{2 \mathbf{D}_1}{w_1^2} - 3 (\mathbf{P}_1^{-1} \frac{2 \mathbf{D}_1}{w_1^2})^2 - 4 \mathbf{P}_1^{-1} \frac{\mathbf{D}_1}{w_1} \mathbf{P}_1^{-1} \frac{6 \mathbf{D}_1}{w_1^3} + \mathbf{P}_1^{-1} \frac{24 \mathbf{D}_1}{w_1^4}  \nonumber \\
& - 6 (\mathbf{P}_2^{-1} \frac{\mathbf{D}_2}{w_2})^4 + 12 (\mathbf{P}_2^{-1} \frac{\mathbf{D}_2}{w_2})^2 \mathbf{P}_2^{-1} \frac{2 \mathbf{D}_2}{w_2^2} - 3 (\mathbf{P}_2^{-1} \frac{2 \mathbf{D}_2}{w_2^2})^2 - 4 \mathbf{P}_2^{-1} \frac{\mathbf{D}_2}{w_2} \mathbf{P}_2^{-1} \frac{6 \mathbf{D}_2}{w_2^3} + \mathbf{P}_2^{-1} \frac{24 \mathbf{D}_2}{w_2^4}  \nonumber \\
& + 6 [\mathbf{P}_3^{-1} (\frac{\mathbf{D}_1}{w_1} + \frac{\mathbf{D}_2}{w_2})]^4 - 12 [\mathbf{P}_3^{-1} (\frac{\mathbf{D}_1}{w_1} + \frac{\mathbf{D}_2}{w_2})]^2 \mathbf{P}_3^{-1} (\frac{2 \mathbf{D}_1}{w_1^2} + \frac{2 \mathbf{D}_2}{w_2^2})  \nonumber \\
& + 3 [\mathbf{P}_3^{-1} (\frac{2 \mathbf{D}_1}{w_1^2} + \frac{2 \mathbf{D}_2}{w_2^2})]^2 + 4 \mathbf{P}_3^{-1} (\frac{\mathbf{D}_1}{w_1} + \frac{\mathbf{D}_2}{w_2}) \mathbf{P}_3^{-1} (\frac{6 \mathbf{D}_1}{w_1^3} + \frac{6 \mathbf{D}_2}{w_2^3}) - \mathbf{P}_3^{-1} (\frac{24 \mathbf{D}_1}{w_1^4} + \frac{24 \mathbf{D}_2}{w_2^4})\}.
\end{align*}
Further expand above expression, collect relevant terms and derive
\begin{align}   \label{eq:lndetP_4thDif_2}
&\frac{d^4}{dw^4} \ln \det \mathbf{P} = 6 \cdot tr\{  \nonumber \\
& \quad \frac{1}{w_1^4} [(\mathbf{I} - \mathbf{P}_3^{-1} \mathbf{D}_1)^4 - (\mathbf{I} - \mathbf{P}_1^{-1} \mathbf{D}_1)^4] 
  + \frac{1}{w_2^4} [(\mathbf{I} - \mathbf{P}_3^{-1} \mathbf{D}_2)^4 - (\mathbf{I} - \mathbf{P}_2^{-1} \mathbf{D}_2)^4]  \nonumber \\
& + \frac{4}{w_1^3 w_2} \mathbf{P}_3^{-1} \mathbf{D}_2 \mathbf{P}_3^{-1} \mathbf{D}_1 (\mathbf{I} - \mathbf{P}_3^{-1} \mathbf{D}_1)^2
  + \frac{4}{w_1 w_2^3} \mathbf{P}_3^{-1} \mathbf{D}_1 \mathbf{P}_3^{-1} \mathbf{D}_2 (\mathbf{I} - \mathbf{P}_3^{-1} \mathbf{D}_2)^2 \nonumber \\
& + \frac{4}{w_1^2 w_2^2} \mathbf{P}_3^{-1} \mathbf{D}_1 (\mathbf{I} - \mathbf{P}_3^{-1} \mathbf{D}_1) \mathbf{P}_3^{-1} \mathbf{D}_2 (\mathbf{I} - \mathbf{P}_3^{-1} \mathbf{D}_2)
  + \frac{2}{w_1^2 w_2^2} (\mathbf{P}_3^{-1} \mathbf{D}_1 \mathbf{P}_3^{-1} \mathbf{D}_2)^2
\}
\end{align}

Further define the following notations
\begin{subequations}  \label{eq:vars_Q=Pinv}
\begin{align}
\mathbf{Q}_1 & \equiv \mathbf{P}_1^{-1}, & \mathbf{Q}_2 & \equiv \mathbf{P}_2^{-1}, & \mathbf{Q}_3 \equiv \mathbf{P}_1^{-1} + \mathbf{P}_2^{-1},  \\
\mathbf{R}_1 & \equiv \mathbf{Q}_1 \mathbf{Q}_3^{-1} \mathbf{Q}_1, & \mathbf{R}_2 & \equiv \mathbf{Q}_2 \mathbf{Q}_3^{-1} \mathbf{Q}_2. &
\end{align}
\end{subequations}
Then we have
\begin{align*}
\mathbf{P}_3^{-1} = (\mathbf{P}_1 + \mathbf{P}_2)^{-1} = \mathbf{P}_1^{-1} (\mathbf{P}_1^{-1} + \mathbf{P}_2^{-1})^{-1} \mathbf{P}_2^{-1} = \mathbf{P}_2^{-1} (\mathbf{P}_1^{-1} + \mathbf{P}_2^{-1})^{-1} \mathbf{P}_1^{-1}
\end{align*}
and
\begin{subequations}  \label{eq:vars_P3related}
\begin{align}
\mathbf{P}_3^{-1} & = \mathbf{Q}_1 \mathbf{Q}_3^{-1} \mathbf{Q}_2 = \mathbf{Q}_2 \mathbf{Q}_3^{-1} \mathbf{Q}_1 = \mathbf{Q}_1 - \mathbf{Q}_1 \mathbf{Q}_3^{-1} \mathbf{Q}_1 = \mathbf{Q}_2 - \mathbf{Q}_2 \mathbf{Q}_3^{-1} \mathbf{Q}_2, \\
\mathbf{I} - \mathbf{P}_3^{-1} \mathbf{D}_1 &= (\mathbf{I} - \mathbf{P}_1^{-1} \mathbf{D}_1) + (\mathbf{P}_1^{-1} \mathbf{D}_1 - \mathbf{P}_3^{-1} \mathbf{D}_1) = \mathbf{I} - \mathbf{Q}_1 \mathbf{D}_1 + \mathbf{R}_1 \mathbf{D}_1,  \\
\mathbf{I} - \mathbf{P}_3^{-1} \mathbf{D}_2 &= (\mathbf{I} - \mathbf{P}_2^{-1} \mathbf{D}_2) + (\mathbf{P}_2^{-1} \mathbf{D}_2 - \mathbf{P}_3^{-1} \mathbf{D}_2) = \mathbf{I} - \mathbf{Q}_2 \mathbf{D}_2 + \mathbf{R}_2 \mathbf{D}_2.
\end{align}
\end{subequations}

\subsection{Decomposition of the fourth-order derivative expression}

Use \textbf{Lemma} \ref{lm:trcyclic} (the cyclic property of trace operation) when necessary in following derivation and throughout this paper. Substitute (\ref{eq:vars_P3related}) into (\ref{eq:lndetP_4thDif_2}) and obtain
\begin{align}   \label{eq:lndetP_4thDif_decompose}
&\frac{d^4}{dw^4} \ln \det \mathbf{P} = 6 \cdot tr\{  \nonumber \\
& \quad \frac{1}{w_1^4} [(\mathbf{I} - \mathbf{Q}_1 \mathbf{D}_1 + \mathbf{R}_1 \mathbf{D}_1)^4 - (\mathbf{I} - \mathbf{Q}_1 \mathbf{D}_1)^4] 
  + \frac{1}{w_2^4} [(\mathbf{I} - \mathbf{Q}_2 \mathbf{D}_2 + \mathbf{R}_2 \mathbf{D}_2)^4 - (\mathbf{I} - \mathbf{Q}_2 \mathbf{D}_2)^4]  \nonumber \\
& + \frac{4}{w_1^3 w_2} \mathbf{P}_3^{-1} \mathbf{D}_2 \mathbf{P}_3^{-1} \mathbf{D}_1 (\mathbf{I} - \mathbf{Q}_1 \mathbf{D}_1 + \mathbf{R}_1 \mathbf{D}_1)^2
  + \frac{4}{w_1 w_2^3} \mathbf{P}_3^{-1} \mathbf{D}_1 \mathbf{P}_3^{-1} \mathbf{D}_2 (\mathbf{I} - \mathbf{Q}_2 \mathbf{D}_2 + \mathbf{R}_2 \mathbf{D}_2)^2 \nonumber \\
& + \frac{4}{w_1^2 w_2^2} \mathbf{P}_3^{-1} \mathbf{D}_1 (\mathbf{I} - \mathbf{Q}_1 \mathbf{D}_1 + \mathbf{R}_1 \mathbf{D}_1) \mathbf{P}_3^{-1} \mathbf{D}_2 (\mathbf{I} - \mathbf{Q}_2 \mathbf{D}_2 + \mathbf{R}_2 \mathbf{D}_2)
  + \frac{2}{w_1^2 w_2^2} (\mathbf{P}_3^{-1} \mathbf{D}_1 \mathbf{P}_3^{-1} \mathbf{D}_2)^2
\}  \nonumber \\
& = 6 \cdot F_1 + 24 \cdot F_2 + 24 \cdot F_3 + 12 \cdot F_{41} + 12 \cdot F_{42} + 24 \cdot F_{43},
\end{align}
where
\begin{align*}
F_1 &= tr\{ \frac{1}{w_1^4} (\mathbf{R}_1 \mathbf{D}_1)^4 + \frac{1}{w_2^4} (\mathbf{R}_2 \mathbf{D}_2)^4  \\
    & + \frac{4}{w_1^3 w_2} \mathbf{P}_3^{-1} \mathbf{D}_2 \mathbf{P}_3^{-1} \mathbf{D}_1 (\mathbf{R}_1 \mathbf{D}_1)^2 + \frac{4}{w_1 w_2^3} \mathbf{P}_3^{-1} \mathbf{D}_1 \mathbf{P}_3^{-1} \mathbf{D}_2 (\mathbf{R}_2 \mathbf{D}_2)^2  \\
    & + \frac{4}{w_1^2 w_2^2} \mathbf{P}_3^{-1} \mathbf{D}_1 \mathbf{R}_1 \mathbf{D}_1 \mathbf{P}_3^{-1} \mathbf{D}_2 \mathbf{R}_2 \mathbf{D}_2 + \frac{2}{w_1^2 w_2^2} (\mathbf{P}_3^{-1} \mathbf{D}_1 \mathbf{P}_3^{-1} \mathbf{D}_2)^2  \},  \\
F_2 &= tr\{ \frac{1}{w_1^4} [ (\mathbf{R}_1 \mathbf{D}_1)^3 (\mathbf{I} - \mathbf{Q}_1 \mathbf{D}_1) + (\mathbf{R}_1 \mathbf{D}_1)^2 (\mathbf{I} - \mathbf{Q}_1 \mathbf{D}_1)^2 + \mathbf{R}_1 \mathbf{D}_1 (\mathbf{I} - \mathbf{Q}_1 \mathbf{D}_1)^3 ]  \\
    & + \frac{1}{w_1^3 w_2} [ \mathbf{P}_3^{-1} \mathbf{D}_2 \mathbf{P}_3^{-1} \mathbf{D}_1 (\mathbf{I} - \mathbf{Q}_1 \mathbf{D}_1)^2 + \mathbf{R}_1 \mathbf{D}_1 \mathbf{P}_3^{-1} \mathbf{D}_2 \mathbf{P}_3^{-1} \mathbf{D}_1 (\mathbf{I} - \mathbf{Q}_1 \mathbf{D}_1)  \\
    &+ \mathbf{P}_3^{-1} \mathbf{D}_2 \mathbf{P}_3^{-1} \mathbf{D}_1 \mathbf{R}_1 \mathbf{D}_1 (\mathbf{I} - \mathbf{Q}_1 \mathbf{D}_1) ] + \frac{1}{w_1^2 w_2^2} \mathbf{P}_3^{-1} \mathbf{D}_2 \mathbf{R}_2 \mathbf{D}_2 \mathbf{P}_3^{-1} \mathbf{D}_1 (\mathbf{I} - \mathbf{Q}_1 \mathbf{D}_1)  \},  \\
F_3 &= tr\{ \frac{1}{w_2^4} [ (\mathbf{R}_2 \mathbf{D}_2)^3 (\mathbf{I} - \mathbf{Q}_2 \mathbf{D}_2) + (\mathbf{R}_2 \mathbf{D}_2)^2 (\mathbf{I} - \mathbf{Q}_2 \mathbf{D}_2)^2 + \mathbf{R}_2 \mathbf{D}_2 (\mathbf{I} - \mathbf{Q}_2 \mathbf{D}_2)^3 ]  \\
    & + \frac{1}{w_1 w_2^3} [ \mathbf{P}_3^{-1} \mathbf{D}_1 \mathbf{P}_3^{-1} \mathbf{D}_2 (\mathbf{I} - \mathbf{Q}_2 \mathbf{D}_2)^2 + \mathbf{R}_2 \mathbf{D}_2 \mathbf{P}_3^{-1} \mathbf{D}_1 \mathbf{P}_3^{-1} \mathbf{D}_2 (\mathbf{I} - \mathbf{Q}_2 \mathbf{D}_2)  \\
    &+ \mathbf{P}_3^{-1} \mathbf{D}_1 \mathbf{P}_3^{-1} \mathbf{D}_2 \mathbf{R}_2 \mathbf{D}_2 (\mathbf{I} - \mathbf{Q}_2 \mathbf{D}_2) ] + \frac{1}{w_1^2 w_2^2} \mathbf{P}_3^{-1} \mathbf{D}_1 \mathbf{R}_1 \mathbf{D}_1 \mathbf{P}_3^{-1} \mathbf{D}_2 (\mathbf{I} - \mathbf{Q}_2 \mathbf{D}_2)  \},  \\
F_{41} &= tr\{ \frac{1}{w_1^4} [ \mathbf{R}_1 \mathbf{D}_1 (\mathbf{I} - \mathbf{Q}_1 \mathbf{D}_1) ]^2 \},  \\
F_{42} &= tr\{ \frac{1}{w_2^4} [ \mathbf{R}_2 \mathbf{D}_2 (\mathbf{I} - \mathbf{Q}_2 \mathbf{D}_2) ]^2 \},  \\
F_{43} &= tr\{ \frac{1}{w_1^2 w_2^2} \mathbf{P}_3^{-1} \mathbf{D}_1 (\mathbf{I} - \mathbf{Q}_1 \mathbf{D}_1) \mathbf{P}_3^{-1} \mathbf{D}_2 (\mathbf{I} - \mathbf{Q}_2 \mathbf{D}_2) \}.
\end{align*}
In other words, the fourth-order derivative of the objective function $\ln \det(\mathbf{P}(w))$ is decomposed into six terms, namely the ones associated with $F_1$, $F_2$, $F_3$, $F_{41}$, $F_{42}$, and $F_{43}$ respectively. If each of the six terms is positive semi-definite, then the fourth-order derivative is naturally positive semi-definite.

In fact, we indeed have 
\begin{equation}  \label{eq:dif4_F1}
F_1 \geq 0
\end{equation}
the proof of which is provided in Appendix \ref{app:dif4_F1}, have
\begin{equation}  \label{eq:dif4_F2+F3}
F_2 \geq 0, \qquad F_3 \geq 0
\end{equation}
the proof of which is provided in Appendix \ref{app:dif4_F2+F3}, have
\begin{equation}  \label{eq:dif4_F41+F42}
F_{41} \geq 0, \qquad F_{42} \geq 0
\end{equation}
the proof of which is provided in Appendix \ref{app:dif4_F41+F42}, and have
\begin{equation}  \label{eq:dif4_F43}
F_{43} \geq 0
\end{equation}
the proof of which is provided in Appendix \ref{app:dif4_F43}. So the inequality (\ref{eq:convexC4}) is proved. In other words, the fourth-order convexity of the $w$-optimization problem is proved.

\section{Nested Newton method}  \label{sec:nested_newton}

For an objective function possessing the second-order convexity, one can apply the golden section method which has guaranteed performance. More specifically, the golden section method is guaranteed to converge at a linear-order rate. It is known that the Newton method is much faster, which can converge (if being able to converge) at a second-order rate \cite{Bonnans2006}. However, the Newton method does not have guaranteed performance and it may diverge.

The Newton method is normally applied when the initial estimate is guessed to be ``close enough'' to the optimum. However, what is the mathematical criterion that defines ``being close enough'' rigorously? It is indeed difficult to give an answer (especially a general answer) to the question. In practical applications, \textit{ad hoc} heuristics and assumptions tend to be involved, which might be a kind of empirically working expedient but is at least theoretically unsound.

Since the Newton method has the merit of being fast but has the drawback of unguaranteed performance, the author ventures a question: Can we have a variant of the Newton method that is guaranteed to converge? In other words, is there a way to keep the merit of the Newton method on one hand and to remove its drawback on the other hand? 

If the objective function further possesses the fourth-order convexity, then the answer to above ventured question is fortunately \textit{yes}. This section presents such a method, which is coined as the \textbf{nested Newton method} --- The term ``nested'' here is rather in the sense as in the \textit{nested interval theorem}, instead of as in some nested control system architectures.

\subsection{Formalism of nested iteration}

Given a generic objective function $f(x)$ that possesses both the second-order convexity and the fourth-order convexity, namely a $f(x)$ satisfies both
\begin{equation} \label{eq:fx_convexC2}
\frac{d^2}{d x^2} f(x) \equiv \ddot{f}(x) \geq 0
\end{equation}
and
\begin{equation} \label{eq:fx_convexC4}
\frac{d^4}{d x^4} f(x) = \frac{d^2}{d x^2} \ddot{f}(x) \geq 0.
\end{equation}
The inequality (\ref{eq:fx_convexC2}) implies that the function $f(x)$ itself is convex (or figuratively speaking, bending downward) whereas the inequality (\ref{eq:fx_convexC4}) implies that the second-order derivative of $f(x)$, i.e., $\ddot{f}(x)$, is also convex.

Finding the optimum of $f(x)$ is equivalent to finding the root $x^*$ of $\dot{f}(x) = 0$. Given two bounding points $x_a$ and $x_b$ that satisfy $\dot{f}(x_a) < 0$ and $\dot{f}(x_b)>0$ and hence contain the root $x^*$ inside the interval $(x_a, x_b)$, the nested Newton method consists in iteration from both ends of the interval $(x_a, x_b)$ to the two ends of a (much) smaller interval $(x_a^*, x_b^*)$ as
\begin{subequations}  \label{eq:nested_newton}
\begin{align}
x_a^* &= x_a - \frac{2 \dot{f}(x_a)}{\ddot{f}(x_a) + \sqrt{\ddot{f}(x_a)^2 - 2 \dot{f}(x_a) k}},  \\
x_b^* &= x_b - \frac{2 \dot{f}(x_b)}{\ddot{f}(x_b) + \sqrt{\ddot{f}(x_b)^2 - 2 \dot{f}(x_b) k}},
\end{align}
\end{subequations}
where
\begin{align*}
k \equiv \frac{\ddot{f}(x_b) - \ddot{f}(x_a)}{x_b - x_a}.
\end{align*}
If the adaptive factor $k$ in (\ref{eq:nested_newton}) is fixed to zero, then (\ref{eq:nested_newton}) is reduced to the Newton method. So (\ref{eq:nested_newton}) can be regarded as a variant of the Newton method and also enjoys the merit of being fast.

\subsection{Proof of the nestedness (and hence convergence)}  \label{sec:proof_nestedness}

This subsection proves the nestedness of the iteration (\ref{eq:nested_newton}), namely
\begin{equation}  \label{eq:nestedness}
x^* \in [x_a^*, x_b^*] \subset [x_a, x_b]. 
\end{equation}

Consider $x_a$ and $x_a^*$. According to (\ref{eq:fx_convexC2}) and the assumed condition $\dot{f}(x_a) < 0$, it is apparent that $x_a < x_a^*$. So the key point is to prove
\begin{equation}  \label{eq:xa_star_leq_x_star}
x_a^* \leq x^*.
\end{equation}
Construct a function $g(x)$ such that
\begin{equation}  \label{eq:gx_def}
g(x) = \dot{f}(x_a) + \ddot{f}(x_a)(x - x_a) + \frac{k}{2} (x - x_a)^2,
\end{equation}
where $k$ is the same to that specified in (\ref{eq:nested_newton}) and $x \in \Omega_x \equiv \{x | x \geq x_a , g(x) \leq 0\}$.

Note that $\ddot{f}(x)$ is convex according to (\ref{eq:fx_convexC4}) and $x^* \in (x_a, x_b)$, so we have
\begin{align*}
\dot{g}(x) = \ddot{f}(x_a) + k (x - x_a) = \ddot{f}(x_a) + \frac{\ddot{f}(x_b) - \ddot{f}(x_a)}{x_b - x_a} (x - x_a) \geq \ddot{f}(x).
\end{align*}
Also note that $g(x_a) = \dot{f}(x_a)$, so
\begin{equation}  \label{eq:gx_geq_dot_f}
g(x) \geq \dot{f}(x)
\end{equation}
for $x \in \Omega_x$. Since $\dot{f}(x)$ increases monotonously, $g(x)$ increases also monotonously and above $\dot{f}(x)$, until it arrives at the right end of $\Omega_x$, namely the point $x$ at which $g(x)$ touches the horizontal line $g(x) = 0$, which is right $x_a^*$ given in (\ref{eq:nested_newton}) --- The trivial verification of why $x_a^*$ is the right end of $\Omega_x$ in the three cases $k<0$, $k=0$, and $k>0$ respectively is left to readers. 

Combining the fact $\dot{f}(x)$ increases monotonously, the fact $\dot{f}(x^*) = 0$, the fact (\ref{eq:gx_geq_dot_f}), and the fact $g(x_a^*) = 0$, we can easily conclude (\ref{eq:xa_star_leq_x_star}). Similarly for $x_b$ and $x_b^*$ we can also prove
\begin{align*}
x^* \leq x_b^* < x_b
\end{align*}
and (\ref{eq:nestedness}) holds. So the nested Newton method formalized in (\ref{eq:nested_newton}) is guaranteed to converge.

\subsection{Discussion}

Substitute the objective function $\ln \det(\mathbf{P}(w))$ in the $w$-optimization problem for the generic objective function $f(x)$ above and obtain a guaranteed fast implementation of the Split CIF.

In practical applications, the golden section method and the nested Newton method, both of which have guaranteed performance, can be used together. The golden section method is applied for only few iterations (say three iterations) at the very beginning to provide a confined interval $[w_a, w_b]$ that contains the optimal $w$. Then from the two bounding points $w_a$ and $w_b$, the nested Newton method is applied iteratively
\footnote{In a similar spirit, the golden section method and the original Newton method can be used together as well. However, the original Newton method has no guaranteed performance and it is difficult to prescribe a proper number of iterations for the golden section method which is applied first before the Newton method --- The argument here is corresponding to the comments at the beginning of Section \ref{sec:nested_newton} --- Besides, it is hard to have a mathematically rigorous way of estimating the optimization accuracy without resorting to any \textit{ad hoc} heuristic or assumption.}. 
According to the author's experience, only one or two further iterations via (\ref{eq:nested_newton}) would often converge to very accurate result of $w$.

Another merit of the nested Newton method, which is a ``by-product'' of the proof provided in Section \ref{sec:proof_nestedness}, is that we can clearly know the optimization accuracy after each iteration via (\ref{eq:nested_newton}). This gives us a natural way of defining the termination criterion.

It is worth noting that rigorous mathematical details for many trivial considerations are intentionally saved in this paper, yet the author believes readers would complement all necessary trivial details by themselves. 

\section{Conclusion}

The $w$-optimization problem of the split covariance intersection filter (or Split CIF for short) is revisited. It enjoys not only the desirable property of convexity (or more clearly, the second-order convexity in this paper's context) but also a more desirable property namely the fourth-order convexity. This paper proves the fourth-order convexity of the $w$-optimization problem, based on which this paper also presents a guaranteed fast implementation of the Split CIF, namely the nested Newton method.

Although the author has explained in Section \ref{sec:woptprob} why the determinant function $\det(\cdot)$ is preferred over the trace function $tr \{ \cdot \}$ for the $w$-optimization problem in practical applications, it can still be examined theoretically whether the $w$-optimization problem also possesses the fourth-order convexity if the trace function $tr \{ \cdot \}$ is adopted.

Besides, more advanced numeric optimization methods that may better take advantage of a generic objective function possessing both the second-order convexity and the fourth-order convexity are worth further study.

\appendix

\section{Positive semi-definiteness of the six terms}

\subsection{Positive semi-definiteness of $F_1$}  \label{app:dif4_F1}

Define the following notations
\begin{equation}  \label{eq:vars_G1+G2}
\mathbf{G}_1 \equiv \sqrt{\mathbf{Q}_3^{-1}} \mathbf{Q}_1 \mathbf{D}_1 \mathbf{Q}_1 \sqrt{\mathbf{Q}_3^{-1}}, \qquad \mathbf{G}_2 \equiv \sqrt{\mathbf{Q}_3^{-1}} \mathbf{Q}_2 \mathbf{D}_2 \mathbf{Q}_2 \sqrt{\mathbf{Q}_3^{-1}}.
\end{equation}
Since $\mathbf{Q}_1$, $\mathbf{Q}_2$, $\mathbf{Q}_3$, $\mathbf{D}_1$, and $\mathbf{D}_2$ are symmetric matrices (more specifically, positive semi-definite matrices), $\mathbf{G}_1$ and $\mathbf{G}_2$ are so as well.

Recall the second equation of (\ref{eq:vars_Q=Pinv}) namely
\begin{align*}
\mathbf{R}_1 \equiv \mathbf{Q}_1 \mathbf{Q}_3^{-1} \mathbf{Q}_1, \qquad \mathbf{R}_2 \equiv \mathbf{Q}_2 \mathbf{Q}_3^{-1} \mathbf{Q}_2
\end{align*}
and the first equation of (\ref{eq:vars_P3related}) namely
\begin{align*}
\mathbf{P}_3^{-1} & = \mathbf{Q}_1 \mathbf{Q}_3^{-1} \mathbf{Q}_2 = \mathbf{Q}_2 \mathbf{Q}_3^{-1} \mathbf{Q}_1.
\end{align*}
Then we have
\begin{align}  \label{eq:dif4_F1_a}
tr\{ (\mathbf{R}_1 \mathbf{D}_1)^4 \} &= tr\{ (\mathbf{Q}_1 \mathbf{Q}_3^{-1} \mathbf{Q}_1 \mathbf{D}_1)^4 \} = tr\{ \mathbf{Q}_1 \sqrt{\mathbf{Q}_3^{-1}} \sqrt{\mathbf{Q}_3^{-1}} \mathbf{Q}_1 \mathbf{D}_1 (\mathbf{Q}_1 \sqrt{\mathbf{Q}_3^{-1}} \sqrt{\mathbf{Q}_3^{-1}} \mathbf{Q}_1 \mathbf{D}_1)^3 \}  \nonumber \\
  &= tr\{ \sqrt{\mathbf{Q}_3^{-1}} \mathbf{Q}_1 \mathbf{D}_1 (\mathbf{Q}_1 \sqrt{\mathbf{Q}_3^{-1}} \sqrt{\mathbf{Q}_3^{-1}} \mathbf{Q}_1 \mathbf{D}_1)^3 \mathbf{Q}_1 \sqrt{\mathbf{Q}_3^{-1}} \} = tr\{ \mathbf{G}_1^4 \}.
\end{align}
Similarly we have
\begin{align}  \label{eq:dif4_F1_b}
tr\{ (\mathbf{R}_2 \mathbf{D}_2)^4 \} = tr\{ \mathbf{G}_2^4 \}.
\end{align}
We have
\begin{align}  \label{eq:dif4_F1_c}
& tr\{ \mathbf{P}_3^{-1} \mathbf{D}_2 \mathbf{P}_3^{-1} \mathbf{D}_1 (\mathbf{R}_1 \mathbf{D}_1)^2 \} = tr\{ \mathbf{Q}_1 \mathbf{Q}_3^{-1} \mathbf{Q}_2 \mathbf{D}_2 \mathbf{Q}_2 \mathbf{Q}_3^{-1} \mathbf{Q}_1 \mathbf{D}_1 (\mathbf{Q}_1 \mathbf{Q}_3^{-1} \mathbf{Q}_1 \mathbf{D}_1)^2 \}  \nonumber \\
& \qquad = tr\{ \mathbf{Q}_1 \sqrt{\mathbf{Q}_3^{-1}} \sqrt{\mathbf{Q}_3^{-1}} \mathbf{Q}_2 \mathbf{D}_2 \mathbf{Q}_2 \sqrt{\mathbf{Q}_3^{-1}} \sqrt{\mathbf{Q}_3^{-1}} \mathbf{Q}_1 \mathbf{D}_1 (\mathbf{Q}_1 \sqrt{\mathbf{Q}_3^{-1}} \sqrt{\mathbf{Q}_3^{-1}} \mathbf{Q}_1 \mathbf{D}_1)^2 \}  \nonumber \\
& \qquad = tr\{ \sqrt{\mathbf{Q}_3^{-1}} \mathbf{Q}_1 \mathbf{D}_1 (\mathbf{Q}_1 \sqrt{\mathbf{Q}_3^{-1}} \sqrt{\mathbf{Q}_3^{-1}} \mathbf{Q}_1 \mathbf{D}_1)^2 \mathbf{Q}_1 \sqrt{\mathbf{Q}_3^{-1}} \sqrt{\mathbf{Q}_3^{-1}} \mathbf{Q}_2 \mathbf{D}_2 \mathbf{Q}_2 \sqrt{\mathbf{Q}_3^{-1}} \}  \nonumber \\
& \qquad = tr\{ \mathbf{G}_1^3 \mathbf{G}_2 \}.
\end{align}
Similarly we have
\begin{align}  \label{eq:dif4_F1_d}
tr\{ \mathbf{P}_3^{-1} \mathbf{D}_1 \mathbf{P}_3^{-1} \mathbf{D}_2 (\mathbf{R}_2 \mathbf{D}_2)^2 \} = tr\{ \mathbf{G}_1 \mathbf{G}_2^3 \}.
\end{align}
We have
\begin{align}  \label{eq:dif4_F1_e}
& tr\{ \mathbf{P}_3^{-1} \mathbf{D}_1 \mathbf{R}_1 \mathbf{D}_1 \mathbf{P}_3^{-1} \mathbf{D}_2 \mathbf{R}_2 \mathbf{D}_2 \} = tr\{ \mathbf{Q}_2 \mathbf{Q}_3^{-1} \mathbf{Q}_1 \mathbf{D}_1 \mathbf{Q}_1 \mathbf{Q}_3^{-1} \mathbf{Q}_1 \mathbf{D}_1 \mathbf{Q}_1 \mathbf{Q}_3^{-1} \mathbf{Q}_2 \mathbf{D}_2 \mathbf{Q}_2 \mathbf{Q}_3^{-1} \mathbf{Q}_2 \mathbf{D}_2 \}  \nonumber \\
& = tr\{ \mathbf{Q}_2 \sqrt{\mathbf{Q}_3^{-1}} \sqrt{\mathbf{Q}_3^{-1}} \mathbf{Q}_1 \mathbf{D}_1 \mathbf{Q}_1 \sqrt{\mathbf{Q}_3^{-1}} \sqrt{\mathbf{Q}_3^{-1}} \mathbf{Q}_1 \mathbf{D}_1 \mathbf{Q}_1 \sqrt{\mathbf{Q}_3^{-1}} \sqrt{\mathbf{Q}_3^{-1}} \mathbf{Q}_2 \mathbf{D}_2 \mathbf{Q}_2 \sqrt{\mathbf{Q}_3^{-1}} \sqrt{\mathbf{Q}_3^{-1}} \mathbf{Q}_2 \mathbf{D}_2 \}  \nonumber \\
& = tr\{ \sqrt{\mathbf{Q}_3^{-1}} \mathbf{Q}_1 \mathbf{D}_1 \mathbf{Q}_1 \sqrt{\mathbf{Q}_3^{-1}} \sqrt{\mathbf{Q}_3^{-1}} \mathbf{Q}_1 \mathbf{D}_1 \mathbf{Q}_1 \sqrt{\mathbf{Q}_3^{-1}} \sqrt{\mathbf{Q}_3^{-1}} \mathbf{Q}_2 \mathbf{D}_2 \mathbf{Q}_2 \sqrt{\mathbf{Q}_3^{-1}} \sqrt{\mathbf{Q}_3^{-1}} \mathbf{Q}_2 \mathbf{D}_2 \mathbf{Q}_2 \sqrt{\mathbf{Q}_3^{-1}} \}  \nonumber \\
& = tr\{ \mathbf{G}_1^2 \mathbf{G}_2^2 \}
\end{align}
and
\begin{align}  \label{eq:dif4_F1_f}
& tr\{ (\mathbf{P}_3^{-1} \mathbf{D}_1 \mathbf{P}_3^{-1} \mathbf{D}_2)^2 \} = tr\{ (\mathbf{Q}_2 \mathbf{Q}_3^{-1} \mathbf{Q}_1 \mathbf{D}_1 \mathbf{Q}_1 \mathbf{Q}_3^{-1} \mathbf{Q}_2 \mathbf{D}_2)^2 \}  \nonumber \\
& = tr\{ \mathbf{Q}_2 \sqrt{\mathbf{Q}_3^{-1}} \sqrt{\mathbf{Q}_3^{-1}} \mathbf{Q}_1 \mathbf{D}_1 \mathbf{Q}_1 \sqrt{\mathbf{Q}_3^{-1}} \sqrt{\mathbf{Q}_3^{-1}} \mathbf{Q}_2 \mathbf{D}_2 \mathbf{Q}_2 \sqrt{\mathbf{Q}_3^{-1}} \sqrt{\mathbf{Q}_3^{-1}} \mathbf{Q}_1 \mathbf{D}_1 \mathbf{Q}_1 \sqrt{\mathbf{Q}_3^{-1}} \sqrt{\mathbf{Q}_3^{-1}} \mathbf{Q}_2 \mathbf{D}_2 \}  \nonumber \\
& = tr\{ \sqrt{\mathbf{Q}_3^{-1}} \mathbf{Q}_1 \mathbf{D}_1 \mathbf{Q}_1 \sqrt{\mathbf{Q}_3^{-1}} \sqrt{\mathbf{Q}_3^{-1}} \mathbf{Q}_2 \mathbf{D}_2 \mathbf{Q}_2 \sqrt{\mathbf{Q}_3^{-1}} \sqrt{\mathbf{Q}_3^{-1}} \mathbf{Q}_1 \mathbf{D}_1 \mathbf{Q}_1 \sqrt{\mathbf{Q}_3^{-1}} \sqrt{\mathbf{Q}_3^{-1}} \mathbf{Q}_2 \mathbf{D}_2 \mathbf{Q}_2 \sqrt{\mathbf{Q}_3^{-1}} \}  \nonumber \\
& = tr\{ (\mathbf{G}_1 \mathbf{G}_2)^2 \}.
\end{align}

Substitute (\ref{eq:dif4_F1_a}), (\ref{eq:dif4_F1_b}), (\ref{eq:dif4_F1_c}), (\ref{eq:dif4_F1_d}), (\ref{eq:dif4_F1_e}), and (\ref{eq:dif4_F1_f}) into $F_1$ of (\ref{eq:lndetP_4thDif_decompose}) and obtain
\begin{align*}
F_1 &= tr\{ \frac{1}{w_1^4} \mathbf{G}_1^4 + \frac{1}{w_2^4} \mathbf{G}_2^4 + \frac{4}{w_1^3 w_2} \mathbf{G}_1^3 \mathbf{G}_2 + \frac{4}{w_1 w_2^3} \mathbf{G}_1 \mathbf{G}_2^3 + \frac{4}{w_1^2 w_2^2} \mathbf{G}_1^2 \mathbf{G}_2^2 + \frac{2}{w_1^2 w_2^2} (\mathbf{G}_1 \mathbf{G}_2)^2 \}   \\
  &= tr\{ (\frac{\mathbf{G}_1}{w_1} + \frac{\mathbf{G}_2}{w_2})^4 \}.
\end{align*}
Since the matrix
\begin{align*}
\mathbf{G}_{12} \equiv \frac{\mathbf{G}_1}{w_1} + \frac{\mathbf{G}_2}{w_2}
\end{align*}
is symmetric, the matrix
\begin{align*}
\mathbf{G}_{12}^4 = (\mathbf{G}_{12} \mathbf{G}_{12})^{\mathrm{T}} (\mathbf{G}_{12} \mathbf{G}_{12})
\end{align*}
is positive semi-definite and
\begin{align*}
F_1 = tr\{ \mathbf{G}_{12}^4 \} \geq 0.
\end{align*}
The inequality (\ref{eq:dif4_F1}) is proved.

\subsection{Positive semi-definiteness of $F_2$ and $F_3$}  \label{app:dif4_F2+F3}

Besides the notations $\mathbf{G}_1$ and $\mathbf{G}_2$ defined in (\ref{eq:vars_G1+G2}), further define the following notations
\begin{equation}  \label{eq:vars_S1+E1}
\mathbf{S}_1 \equiv \sqrt{\mathbf{D}_1} \mathbf{Q}_1 \sqrt{\mathbf{Q}_3^{-1}}, \qquad \mathbf{E}_1 \equiv \mathbf{I} - \sqrt{\mathbf{D}_1} \mathbf{Q}_1 \sqrt{\mathbf{D}_1}.
\end{equation}
Since $\mathbf{Q}_1 \equiv \mathbf{P}_1^{-1}$ and $\mathbf{D}_1$ are symmetric matrices, $\mathbf{E}_1$ is so as well. Besides, note that
\begin{align*}
\mathbf{P}_1 \geq \mathbf{D}_1 \geq 0,
\end{align*}
then we have
\footnote{It is worth noting that $\mathbf{P}_1$ is always positive definite in practical applications and $\mathbf{Q}_1 \equiv \mathbf{P}_1^{-1}$ is well-defined mathematically, whereas $\mathbf{D}_1^{-1}$ is only positive semi-definite and $\mathbf{D}_1^{-1}$ may not exist in rigorous sense. However, the inequality (\ref{eq:D1inv_geq_Q1inv}) is still meaningful, which can be understood as follows: Since $\mathbf{P}_1 > 0$, we can always perturb $\mathbf{D}_1$ by an arbitrarily infinitesimal positive semi-definite matrix $\mathbf{M}_{\epsilon}$ such that
\begin{align*}
\mathbf{P}_1 \geq \mathbf{D}_1 + \mathbf{M}_{\epsilon} > 0 \iff (\mathbf{D}_1 + \mathbf{M}_{\epsilon})^{-1} \geq \mathbf{P}_1^{-1} > 0.
\end{align*}
Take the limit as $\mathbf{M}_{\epsilon} \to 0$ and hence obtain (\ref{eq:D1inv_geq_Q1inv}). In fact, for $\mathbf{D}_1^{-1}$ which may not exist in rigorous sense, we can still understand it in a meaningful logic somehow like that we have the value (positive-side) zero, i.e., $0_+$, and then we have $0_+^{-1} = +\infty$, i.e., an arbitrarily large value. In such way, (\ref{eq:D1inv_geq_Q1inv}) can even be directly understood, without resorting to the perturbation-limit skill.}
\begin{equation}  \label{eq:D1inv_geq_Q1inv}
\mathbf{D}_1^{-1} \geq \mathbf{P}_1^{-1} > 0  \implies \mathbf{D}_1^{-1} - \mathbf{P}_1^{-1} \geq 0
\end{equation}
and hence have
\begin{align}  \label{eq:vars_E1_pos}
\mathbf{E}_1 = \sqrt{\mathbf{D}_1} \mathbf{D}_1^{-1} \sqrt{\mathbf{D}_1} - \sqrt{\mathbf{D}_1} \mathbf{P}_1^{-1} \sqrt{\mathbf{D}_1} = \sqrt{\mathbf{D}_1} (\mathbf{D}_1^{-1} - \mathbf{P}_1^{-1}) \sqrt{\mathbf{D}_1} \geq 0.
\end{align}

Recall the second equation of (\ref{eq:vars_Q=Pinv}) and the first equation of (\ref{eq:vars_P3related}) again. Then for the term $(\mathbf{R}_1 \mathbf{D}_1)^3 (\mathbf{I} - \mathbf{Q}_1 \mathbf{D}_1)$ of $F_2$, we have
\begin{align*}
& tr\{ (\mathbf{R}_1 \mathbf{D}_1)^3 (\mathbf{I} - \mathbf{Q}_1 \mathbf{D}_1) \} = tr\{ (\mathbf{Q}_1 \mathbf{Q}_3^{-1} \mathbf{Q}_1 \mathbf{D}_1)^3 (\mathbf{I} - \mathbf{Q}_1 \mathbf{D}_1) \}  \\
& \qquad = tr\{ \mathbf{Q}_1 \sqrt{\mathbf{Q}_3^{-1}} (\sqrt{\mathbf{Q}_3^{-1}} \mathbf{Q}_1 \mathbf{D}_1 \mathbf{Q}_1 \sqrt{\mathbf{Q}_3^{-1}})^2 (\sqrt{\mathbf{Q}_3^{-1}} \mathbf{Q}_1 \sqrt{\mathbf{D}_1}) (\mathbf{I} - \sqrt{\mathbf{D}_1} \mathbf{Q}_1 \sqrt{\mathbf{D}_1}) \sqrt{\mathbf{D}_1} \}  \\
& \qquad = tr\{ (\sqrt{\mathbf{Q}_3^{-1}} \mathbf{Q}_1 \mathbf{D}_1 \mathbf{Q}_1 \sqrt{\mathbf{Q}_3^{-1}})^2 (\sqrt{\mathbf{Q}_3^{-1}} \mathbf{Q}_1 \sqrt{\mathbf{D}_1}) (\mathbf{I} - \sqrt{\mathbf{D}_1} \mathbf{Q}_1 \sqrt{\mathbf{D}_1}) \sqrt{\mathbf{D}_1} \mathbf{Q}_1 \sqrt{\mathbf{Q}_3^{-1}} \}  \\
& \qquad = tr\{ \mathbf{G}_1^2 \mathbf{S}_1^{\mathrm{T}} \mathbf{E}_1 \mathbf{S}_1 \}.
\end{align*}
Similar derivation applies to the terms $(\mathbf{R}_1 \mathbf{D}_1)^2 (\mathbf{I} - \mathbf{Q}_1 \mathbf{D}_1)^2$ and $\mathbf{R}_1 \mathbf{D}_1 (\mathbf{I} - \mathbf{Q}_1 \mathbf{D}_1)^3$ of $F_2$ and together we have
\begin{subequations}  \label{eq:dif4_F2_a}
\begin{align}  
tr\{ (\mathbf{R}_1 \mathbf{D}_1)^3 (\mathbf{I} - \mathbf{Q}_1 \mathbf{D}_1) \} &= tr\{ \mathbf{G}_1^2 \mathbf{S}_1^{\mathrm{T}} \mathbf{E}_1 \mathbf{S}_1 \},  \\
tr\{ (\mathbf{R}_1 \mathbf{D}_1)^2 (\mathbf{I} - \mathbf{Q}_1 \mathbf{D}_1)^2 \} &= tr\{ \mathbf{G}_1 \mathbf{S}_1^{\mathrm{T}} \mathbf{E}_1^2 \mathbf{S}_1 \},  \\
tr\{ \mathbf{R}_1 \mathbf{D}_1 (\mathbf{I} - \mathbf{Q}_1 \mathbf{D}_1)^3 \} &= tr\{ \mathbf{S}_1^{\mathrm{T}} \mathbf{E}_1^3 \mathbf{S}_1 \}.
\end{align}
\end{subequations}
We also have
\begin{align}  \label{eq:dif4_F2_b}
& tr\{ \mathbf{P}_3^{-1} \mathbf{D}_2 \mathbf{P}_3^{-1} \mathbf{D}_1 (\mathbf{I} - \mathbf{Q}_1 \mathbf{D}_1)^2 \} = tr\{ \mathbf{Q}_1 \mathbf{Q}_3^{-1} \mathbf{Q}_2 \mathbf{D}_2 \mathbf{Q}_2 \mathbf{Q}_3^{-1} \mathbf{Q}_1 \mathbf{D}_1 (\mathbf{I} - \mathbf{Q}_1 \mathbf{D}_1)^2 \}  \nonumber \\
& \qquad = tr\{ \mathbf{Q}_1 \sqrt{\mathbf{Q}_3^{-1}} (\sqrt{\mathbf{Q}_3^{-1}} \mathbf{Q}_2 \mathbf{D}_2 \mathbf{Q}_2 \sqrt{\mathbf{Q}_3^{-1}}) (\sqrt{\mathbf{Q}_3^{-1}} \mathbf{Q}_1 \sqrt{\mathbf{D}_1}) (\mathbf{I} - \sqrt{\mathbf{D}_1} \mathbf{Q}_1 \sqrt{\mathbf{D}_1})^2 \sqrt{\mathbf{D}_1} \}  \nonumber \\
& \qquad = tr\{ (\sqrt{\mathbf{Q}_3^{-1}} \mathbf{Q}_2 \mathbf{D}_2 \mathbf{Q}_2 \sqrt{\mathbf{Q}_3^{-1}}) (\sqrt{\mathbf{Q}_3^{-1}} \mathbf{Q}_1 \sqrt{\mathbf{D}_1}) (\mathbf{I} - \sqrt{\mathbf{D}_1} \mathbf{Q}_1 \sqrt{\mathbf{D}_1})^2 \sqrt{\mathbf{D}_1} \mathbf{Q}_1 \sqrt{\mathbf{Q}_3^{-1}} \}  \nonumber \\
& \qquad = tr\{ \mathbf{G}_2 \mathbf{S}_1^{\mathrm{T}} \mathbf{E}_1^2 \mathbf{S}_1 \},
\end{align}
and similarly have
\begin{subequations} \label{eq:dif4_F2_c}
\begin{align}
tr\{ \mathbf{R}_1 \mathbf{D}_1 \mathbf{P}_3^{-1} \mathbf{D}_2 \mathbf{P}_3^{-1} \mathbf{D}_1 (\mathbf{I} - \mathbf{Q}_1 \mathbf{D}_1) \} &= tr\{ \mathbf{G}_1 \mathbf{G}_2 \mathbf{S}_1^{\mathrm{T}} \mathbf{E}_1 \mathbf{S}_1 \},\\
tr\{ \mathbf{P}_3^{-1} \mathbf{D}_2 \mathbf{P}_3^{-1} \mathbf{D}_1 \mathbf{R}_1 \mathbf{D}_1 (\mathbf{I} - \mathbf{Q}_1 \mathbf{D}_1) \} &= tr\{ \mathbf{G}_2 \mathbf{G}_1 \mathbf{S}_1^{\mathrm{T}} \mathbf{E}_1 \mathbf{S}_1 \},\\
tr\{ \mathbf{P}_3^{-1} \mathbf{D}_2 \mathbf{R}_2 \mathbf{D}_2 \mathbf{P}_3^{-1} \mathbf{D}_1 (\mathbf{I} - \mathbf{Q}_1 \mathbf{D}_1) \} &= tr\{ \mathbf{G}_2^2 \mathbf{S}_1^{\mathrm{T}} \mathbf{E}_1 \mathbf{S}_1 \}.
\end{align}
\end{subequations}

Substitute (\ref{eq:dif4_F2_a}), (\ref{eq:dif4_F2_b}), and (\ref{eq:dif4_F2_c}) into $F_2$ of (\ref{eq:lndetP_4thDif_decompose}) and obtain
\begin{align*}
F_2 &= tr\{ \frac{1}{w_1^4} ( \mathbf{G}_1^2 \mathbf{S}_1^{\mathrm{T}} \mathbf{E}_1 \mathbf{S}_1 + \mathbf{G}_1 \mathbf{S}_1^{\mathrm{T}} \mathbf{E}_1^2 \mathbf{S}_1 + \mathbf{S}_1^{\mathrm{T}} \mathbf{E}_1^3 \mathbf{S}_1 )  \\
    & + \frac{1}{w_1^3 w_2} ( \mathbf{G}_2 \mathbf{S}_1^{\mathrm{T}} \mathbf{E}_1^2 \mathbf{S}_1 + \mathbf{G}_1 \mathbf{G}_2 \mathbf{S}_1^{\mathrm{T}} \mathbf{E}_1 \mathbf{S}_1 + \mathbf{G}_2 \mathbf{G}_1 \mathbf{S}_1^{\mathrm{T}} \mathbf{E}_1 \mathbf{S}_1 ) + \frac{1}{w_1^2 w_2^2} \mathbf{G}_2^2 \mathbf{S}_1^{\mathrm{T}} \mathbf{E}_1 \mathbf{S}_1  \}
\end{align*}
Recall (\ref{eq:var_simple}) and note that
\begin{align*}
| w_1 | = - w_1 = w > 0, 
\end{align*}
so we have
\begin{align}  \label{eq:dif4_F2_new}
F_2 &= tr\{ \frac{1}{| w_1 |^4} ( \mathbf{G}_1^2 \mathbf{S}_1^{\mathrm{T}} \mathbf{E}_1 \mathbf{S}_1 + \mathbf{G}_1 \mathbf{S}_1^{\mathrm{T}} \mathbf{E}_1^2 \mathbf{S}_1 + \mathbf{S}_1^{\mathrm{T}} \mathbf{E}_1^3 \mathbf{S}_1 )  \nonumber \\
    & - \frac{1}{| w_1 |^3 w_2} ( \mathbf{G}_2 \mathbf{S}_1^{\mathrm{T}} \mathbf{E}_1^2 \mathbf{S}_1 + \mathbf{G}_1 \mathbf{G}_2 \mathbf{S}_1^{\mathrm{T}} \mathbf{E}_1 \mathbf{S}_1 + \mathbf{G}_2 \mathbf{G}_1 \mathbf{S}_1^{\mathrm{T}} \mathbf{E}_1 \mathbf{S}_1 ) + \frac{1}{| w_1 |^2 w_2^2} \mathbf{G}_2^2 \mathbf{S}_1^{\mathrm{T}} \mathbf{E}_1 \mathbf{S}_1  \}  \nonumber \\
    &= \frac{1}{| w_1 |} tr\{ (\frac{\mathbf{G}_1}{| w_1 |} - \frac{\mathbf{G}_2}{w_2})^2 \mathbf{S}_1^{\mathrm{T}} \frac{\mathbf{E}_1}{| w_1 |} \mathbf{S}_1  +  (\frac{\mathbf{G}_1}{| w_1 |} - \frac{\mathbf{G}_2}{w_2}) \mathbf{S}_1^{\mathrm{T}} (\frac{\mathbf{E}_1}{| w_1 |})^2 \mathbf{S}_1  +  \mathbf{S}_1^{\mathrm{T}} (\frac{\mathbf{E}_1}{| w_1 |})^3 \mathbf{S}_1  \}
\end{align}

Since the matrices
\begin{align*}
\frac{\mathbf{G}_1}{| w_1 |} - \frac{\mathbf{G}_2}{w_2}, \qquad \frac{\mathbf{E}_1}{| w_1 |}
\end{align*}
are real-value symmetric matrices and $\mathbf{E}_1$ is positive semi-definite as well according to (\ref{eq:vars_E1_pos}), we can decompose them as \cite{Horn1990}
\begin{align}  \label{eq:G1+G2+E1_decompose}
\frac{\mathbf{G}_1}{| w_1 |} - \frac{\mathbf{G}_2}{w_2} = \mathbf{U} \mathbf{X} \mathbf{U}^{\mathrm{T}}, \qquad \frac{\mathbf{E}_1}{| w_1 |} = \mathbf{V} \mathbf{Y} \mathbf{V}^{\mathrm{T}},
\end{align}
where the matrices $\mathbf{U}$ and $\mathbf{V}$ are orthonormal, and the matrices $\mathbf{X}$ and $\mathbf{Y}$ are diagonal with $\mathbf{Y}$ being positive semi-definite as well. Denote
\begin{align*}
\mathbf{X} \equiv \begin{bmatrix} x_1 & & & \\ & x_2 & & \\ & & \ddots & \\ & & & x_n \end{bmatrix}, \quad
\mathbf{Y} \equiv \begin{bmatrix} y_1 & & & \\ & y_2 & & \\ & & \ddots & \\ & & & y_n \end{bmatrix}, \quad
\mathbf{Z} \equiv \mathbf{V}^{\mathrm{T}} \mathbf{S}_1 \mathbf{U} \equiv \begin{bmatrix} z_{11} & z_{12} & \cdots & z_{1n} \\ z_{21} & z_{22} & \cdots & z_{2n} \\ \vdots & \vdots & \ddots & \vdots \\ z_{n1} & z_{n2} & \cdots & z_{nn} \end{bmatrix},
\end{align*}
where each $y_i \geq 0$.

Substitute (\ref{eq:G1+G2+E1_decompose}) into (\ref{eq:dif4_F2_new}) and obtain
\begin{align}  \label{eq:dif4_F2_new2}
F_2 &= \frac{1}{| w_1 |} tr\{ (\mathbf{U} \mathbf{X} \mathbf{U}^{\mathrm{T}})^2 \mathbf{S}_1^{\mathrm{T}} \mathbf{V} \mathbf{Y} \mathbf{V}^{\mathrm{T}} \mathbf{S}_1  +  \mathbf{U} \mathbf{X} \mathbf{U}^{\mathrm{T}} \mathbf{S}_1^{\mathrm{T}} (\mathbf{V} \mathbf{Y} \mathbf{V}^{\mathrm{T}})^2 \mathbf{S}_1  +  \mathbf{S}_1^{\mathrm{T}} (\mathbf{V} \mathbf{Y} \mathbf{V}^{\mathrm{T}})^3 \mathbf{S}_1  \}  \nonumber \\
  &= \frac{1}{| w_1 |} tr\{ \mathbf{U} \mathbf{X}^2 \mathbf{U}^{\mathrm{T}} \mathbf{S}_1^{\mathrm{T}} \mathbf{V} \mathbf{Y} \mathbf{V}^{\mathrm{T}} \mathbf{S}_1  +  \mathbf{U} \mathbf{X} \mathbf{U}^{\mathrm{T}} \mathbf{S}_1^{\mathrm{T}} \mathbf{V} \mathbf{Y}^2 \mathbf{V}^{\mathrm{T}} \mathbf{S}_1  +  \mathbf{U} \mathbf{U}^{\mathrm{T}} \mathbf{S}_1^{\mathrm{T}} \mathbf{V} \mathbf{Y}^3 \mathbf{V}^{\mathrm{T}} \mathbf{S}_1  \}  \nonumber \\
  &= \frac{1}{| w_1 |} tr\{ \mathbf{X}^2 \mathbf{U}^{\mathrm{T}} \mathbf{S}_1^{\mathrm{T}} \mathbf{V} \mathbf{Y} \mathbf{V}^{\mathrm{T}} \mathbf{S}_1 \mathbf{U}  +  \mathbf{X} \mathbf{U}^{\mathrm{T}} \mathbf{S}_1^{\mathrm{T}} \mathbf{V} \mathbf{Y}^2 \mathbf{V}^{\mathrm{T}} \mathbf{S}_1 \mathbf{U}  +  \mathbf{U}^{\mathrm{T}} \mathbf{S}_1^{\mathrm{T}} \mathbf{V} \mathbf{Y}^3 \mathbf{V}^{\mathrm{T}} \mathbf{S}_1 \mathbf{U} \}  \nonumber \\
  &= \frac{1}{| w_1 |} tr\{ \mathbf{X}^2 \mathbf{Z}^{\mathrm{T}} \mathbf{Y} \mathbf{Z}  +  \mathbf{X} \mathbf{Z}^{\mathrm{T}} \mathbf{Y}^2 \mathbf{Z}  +  \mathbf{Z}^{\mathrm{T}} \mathbf{Y}^3 \mathbf{Z} \} = \frac{1}{| w_1 |} \sum_{i,j=1}^n (x_i^2 + x_i y_j + y_j^2) y_j z_{ji}^2 \geq 0.
\end{align}
By symmetry between $F_3$ and $F_2$ we can also prove $F_3 \geq 0$ in similar way. So the inequality (\ref{eq:dif4_F2+F3}) is proved.

\subsection{Positive semi-definiteness of $F_{41}$ and $F_{42}$}  \label{app:dif4_F41+F42}

Recall the second equation of (\ref{eq:vars_Q=Pinv}), the first equation of (\ref{eq:vars_P3related}), and (\ref{eq:vars_S1+E1}). Consider $F_{41}$ of (\ref{eq:lndetP_4thDif_decompose}) and we have
\begin{align*}
F_{41} &= tr\{ \frac{1}{w_1^4} [ \mathbf{R}_1 \mathbf{D}_1 (\mathbf{I} - \mathbf{Q}_1 \mathbf{D}_1) ]^2 \} = \frac{1}{w_1^4} tr\{ [ \mathbf{Q}_1 \mathbf{Q}_3^{-1} \mathbf{Q}_1 \mathbf{D}_1 (\mathbf{I} - \mathbf{Q}_1 \mathbf{D}_1) ]^2 \}  \\
  &= \frac{1}{w_1^4} tr\{ \mathbf{Q}_1 \sqrt{\mathbf{Q}_3^{-1}} \sqrt{\mathbf{Q}_3^{-1}} \mathbf{Q}_1 \sqrt{\mathbf{D}_1} (\mathbf{I} - \sqrt{\mathbf{D}_1} \mathbf{Q}_1 \sqrt{\mathbf{D}_1}) \sqrt{\mathbf{D}_1}  \\ 
  & \qquad \qquad \mathbf{Q}_1 \sqrt{\mathbf{Q}_3^{-1}} \sqrt{\mathbf{Q}_3^{-1}} \mathbf{Q}_1 \sqrt{\mathbf{D}_1} (\mathbf{I} - \sqrt{\mathbf{D}_1} \mathbf{Q}_1 \sqrt{\mathbf{D}_1}) \sqrt{\mathbf{D}_1} \}  \\
  &= \frac{1}{w_1^4} tr\{ \sqrt{\mathbf{Q}_3^{-1}} \mathbf{Q}_1 \sqrt{\mathbf{D}_1} (\mathbf{I} - \sqrt{\mathbf{D}_1} \mathbf{Q}_1 \sqrt{\mathbf{D}_1}) \sqrt{\mathbf{D}_1} \mathbf{Q}_1 \sqrt{\mathbf{Q}_3^{-1}}  \\ 
  & \qquad \qquad \sqrt{\mathbf{Q}_3^{-1}} \mathbf{Q}_1 \sqrt{\mathbf{D}_1} (\mathbf{I} - \sqrt{\mathbf{D}_1} \mathbf{Q}_1 \sqrt{\mathbf{D}_1}) \sqrt{\mathbf{D}_1} \mathbf{Q}_1 \sqrt{\mathbf{Q}_3^{-1}} \}  \\
  &= \frac{1}{w_1^4} tr\{ (\mathbf{S}_1^{\mathrm{T}} \mathbf{E}_1 \mathbf{S}_1)^{\mathrm{T}} (\mathbf{S}_1^{\mathrm{T}} \mathbf{E}_1 \mathbf{S}_1) \} \geq 0,
\end{align*}
because $(\mathbf{S}_1^{\mathrm{T}} \mathbf{E}_1 \mathbf{S}_1)^{\mathrm{T}} (\mathbf{S}_1^{\mathrm{T}} \mathbf{E}_1 \mathbf{S}_1)$ is positive semi-definite. By symmetry between $F_{42}$ and $F_{41}$ we can also prove $F_{42} \geq 0$ in similar way. So the inequality (\ref{eq:dif4_F41+F42}) is proved.

\subsection{Positive semi-definiteness of $F_{43}$}  \label{app:dif4_F43}

Besides the notations defined in (\ref{eq:vars_S1+E1}), similarly define the following notations
\begin{equation}  \label{eq:vars_S2+E2}
\mathbf{S}_2 \equiv \sqrt{\mathbf{D}_2} \mathbf{Q}_2 \sqrt{\mathbf{Q}_3^{-1}}, \qquad \mathbf{E}_2 \equiv \mathbf{I} - \sqrt{\mathbf{D}_2} \mathbf{Q}_2 \sqrt{\mathbf{D}_2}.
\end{equation}
Like (\ref{eq:vars_E1_pos}), we also have
\begin{align}  \label{eq:vars_E2_pos}
\mathbf{E}_2 = \sqrt{\mathbf{D}_2} \mathbf{D}_2^{-1} \sqrt{\mathbf{D}_2} - \sqrt{\mathbf{D}_2} \mathbf{P}_2^{-1} \sqrt{\mathbf{D}_2} = \sqrt{\mathbf{D}_2} (\mathbf{D}_2^{-1} - \mathbf{P}_2^{-1}) \sqrt{\mathbf{D}_2} \geq 0.
\end{align}
Recall the first equation of (\ref{eq:vars_P3related}). Then consider $F_{43}$ of (\ref{eq:lndetP_4thDif_decompose}) and we have
\begin{align*}
F_{43} &= tr\{ \frac{1}{w_1^2 w_2^2} \mathbf{P}_3^{-1} \mathbf{D}_1 (\mathbf{I} - \mathbf{Q}_1 \mathbf{D}_1) \mathbf{P}_3^{-1} \mathbf{D}_2 (\mathbf{I} - \mathbf{Q}_2 \mathbf{D}_2) \}  \\
  &= \frac{1}{w_1^2 w_2^2} tr\{ \mathbf{Q}_2 \mathbf{Q}_3^{-1} \mathbf{Q}_1 \mathbf{D}_1 (\mathbf{I} - \mathbf{Q}_1 \mathbf{D}_1) \mathbf{Q}_1 \mathbf{Q}_3^{-1} \mathbf{Q}_2 \mathbf{D}_2 (\mathbf{I} - \mathbf{Q}_2 \mathbf{D}_2) \}  \\
  &= \frac{1}{w_1^2 w_2^2} tr\{ \mathbf{Q}_2 \sqrt{\mathbf{Q}_3^{-1}} \sqrt{\mathbf{Q}_3^{-1}} \mathbf{Q}_1 \sqrt{\mathbf{D}_1} (\mathbf{I} - \sqrt{\mathbf{D}_1} \mathbf{Q}_1 \sqrt{\mathbf{D}_1}) \sqrt{\mathbf{D}_1}  \\
  & \qquad \qquad \quad \mathbf{Q}_1 \sqrt{\mathbf{Q}_3^{-1}} \sqrt{\mathbf{Q}_3^{-1}} \mathbf{Q}_2 \sqrt{\mathbf{D}_2} (\mathbf{I} - \sqrt{\mathbf{D}_2} \mathbf{Q}_2 \sqrt{\mathbf{D}_2}) \sqrt{\mathbf{D}_2} \}  \\
  &= \frac{1}{w_1^2 w_2^2} tr\{ \sqrt{\mathbf{Q}_3^{-1}} \mathbf{Q}_1 \sqrt{\mathbf{D}_1} (\mathbf{I} - \sqrt{\mathbf{D}_1} \mathbf{Q}_1 \sqrt{\mathbf{D}_1}) \sqrt{\mathbf{D}_1} \mathbf{Q}_1 \sqrt{\mathbf{Q}_3^{-1}}  \\
  & \qquad \qquad \quad \sqrt{\mathbf{Q}_3^{-1}} \mathbf{Q}_2 \sqrt{\mathbf{D}_2} (\mathbf{I} - \sqrt{\mathbf{D}_2} \mathbf{Q}_2 \sqrt{\mathbf{D}_2}) \sqrt{\mathbf{D}_2} \mathbf{Q}_2 \sqrt{\mathbf{Q}_3^{-1}} \}  \\
  &= \frac{1}{w_1^2 w_2^2} tr\{ \mathbf{S}_1^{\mathrm{T}} \mathbf{E}_1 \mathbf{S}_1 \mathbf{S}_2^{\mathrm{T}} \mathbf{E}_2 \mathbf{S}_2 \} = \frac{1}{w_1^2 w_2^2} tr\{ \sqrt{\mathbf{S}_1^{\mathrm{T}} \mathbf{E}_1 \mathbf{S}_1}^{\mathrm{T}} \mathbf{S}_2^{\mathrm{T}} \mathbf{E}_2 \mathbf{S}_2 \sqrt{\mathbf{S}_1^{\mathrm{T}} \mathbf{E}_1 \mathbf{S}_1} \}  \geq 0.
\end{align*}
So the inequality (\ref{eq:dif4_F43}) is proved.

%%%%%%%%%%%%%%%%%%%%%%%%%%%%%%%%%%%%%%%%%%%%%%%%%%%%%%%%

\newpage
\addcontentsline{toc}{chapter}{Bibliography}

\fancyhf{} % clear header/footer

\bibliographystyle{unsrt}
\bibliography{LI_Hao_Refs_EFISCIF}

\fancyhead[LE,RO]{\thepage}
\fancyhead[RE]{\textit{ \nouppercase{\leftmark}} }
\fancyhead[LO]{\textit{ \nouppercase{\rightmark}} }

\end{document}